\documentclass[letterpaper, 10 pt, conference]{ieeeconf}  
\usepackage{amsmath}
\usepackage{amssymb}  
\usepackage{bm}
\usepackage{mathtools}
\usepackage{xcolor}
\usepackage{graphicx}
\usepackage{booktabs}
\usepackage{cite}
\usepackage{url}
\usepackage{hyperref}
\usepackage{todonotes}
\let\labelindent\relax
\usepackage{enumitem}
\usepackage{multirow}
\usepackage{algorithm}
\usepackage{algpseudocode}

\usepackage{fontawesome5}
\usepackage[most]{tcolorbox}
\newtcbox{\linkpill}{
  on line,
  box align=base,
  colback=gray!10,
  colframe=gray!10,
  arc=1.5mm,
  boxrule=0pt,
  left=2pt,
  right=2pt,
  top=1pt,
  bottom=1pt
}
\definecolor{slsorange}{HTML}{D97706}

\newcommand{\R}{\mathbb{R}}
\newcommand{\nx}{n_x}
\newcommand{\nuu}{n_u}
\newcommand{\Phix}{\mathbf{\Phi}^{\mathrm{x}}}
\newcommand{\Phiu}{\mathbf{\Phi}^{\mathrm{u}}}
\newcommand{\Phit}{\mathbf{\Phi}}
\newcommand{\Psix}{\mathbf{\Psi}^{\mathrm{x}}}
\newcommand{\Psiu}{\mathbf{\Psi}^{\mathrm{u}}}
\newcommand{\Psit}{\mathbf{\Psi}}
\newcommand{\HPhix}{\mathbf{\hat\Phi}^{\mathrm{x}}}
\newcommand{\HPhiu}{\mathbf{\hat\Phi}^{\mathrm{u}}}
\newcommand{\HPhit}{\mathbf{\hat\Phi}}
\newcommand{\TPhix}{\mathbf{\tilde\Phi}^{\mathrm{x}}}
\newcommand{\TPhiu}{\mathbf{\tilde\Phi}^{\mathrm{u}}}
\newcommand{\TPhit}{\mathbf{\tilde\Phi}}

\newcommand{\RR}{\mathbb{R}}
\newcommand{\cX}{\mathcal{A}}
\newcommand{\f}{p}

\newcommand{\augx}{\bm{\alpha}}
\newcommand{\nomx}{\tilde{\bm{\alpha}}}
\newcommand{\pathx}{{\augx}_{\nomx}}

\newcommand{\J}{J}
\newcommand{\Hess}{H}

\newcommand{\rem}{\bm{r}}
\newcommand{\Rem}{\mathcal{R}}
\newcommand{\remlow}{\underline{\rem}}
\newcommand{\remup}{\overline{\rem}}

\newcommand{\rlow}{\underline{r}}
\newcommand{\rup}{\overline{r}}

\newcommand{\lowerW}{\mathbf{\underline{W}}}
\newcommand{\upperW}{\mathbf{\overline{W}}}
\newcommand{\lowerb}{\mathbf{\underline{b}}}
\newcommand{\upperb}{\mathbf{\overline{b}}}

\newtheorem{theorem}{Theorem}
\newtheorem{lemma}{Lemma}

\newtheorem{proposition}{Proposition}
\newtheorem{corollary}{Corollary}

\newcommand{\delete}[1]{}

\IEEEoverridecommandlockouts                              

\usepackage{graphicx} 

\title{\LARGE \bf
GPU-Parallel Linearization Error Bounds for Real-Time \\ Robust Optimal Control of Nonlinear and Neural Network Dynamics
}

\author{Jeffrey Fang$^\star$ \and Keyi Shen$^\star$ \and Anutam Srinivasan \and Glen Chou
\thanks{$^\star$ Equal contribution. All authors are with the Georgia Institute of Technology, Atlanta, GA, USA, 30308.
        {\tt\small \{jfang301, kshen84, asrinivasan350, chou\}@gatech.edu}}%
}

\begin{document}

\maketitle
\thispagestyle{empty}
\pagestyle{empty}

\begin{abstract}

\looseness-1This paper studies real-time robust optimal control for uncertain nonlinear systems, where linear time-varying (LTV) approximations make planning tractable but require sound linearization error bounds (LEBs) to guarantee robust constraint satisfaction. We develop tight, differentiable, GPU-parallel LEBs for LTV approximations of nonlinear and neural network (NN) dynamics. For analytic dynamics, we introduce path-based Hessian bounds that are tighter than standard interval methods. For NN dynamics, we derive certified LEBs using NN verifier-generated affine relaxations and local Jacobian corrections. We adapt a GPU-parallel system-level synthesis LTV-based robust control solver to be compatible with these LEBs by extending it to handle right-invertible disturbance matrices and non-zero-centered disturbance sets for tight zonotopic uncertainty propagation. Our method, GPUSLS-LEO, enables online optimization of robust feedback policies that account for linearization error, producing tight, formally verified reachable tubes. On complex nonlinear and NN dynamics up to 168 state dimensions, our method can compute robust control policies on the GPU at rates up to 67 Hz, reducing solve times and conservativeness relative to baselines while preserving formal guarantees and real-time performance. 

\centering{\linkpill{\href{https://trustworthyrobotics.github.io/gpusls-leo/}{\textcolor{slsorange}{\faGlobe\ Project Website}}}
\quad
\linkpill{\href{https://github.com/trustworthyrobotics/gpusls-leo}{\textcolor{slsorange}{\faGithub\ Code (GitHub)}}}}

\end{abstract}

\section{Introduction}

\looseness-1Safe real-time nonlinear control under uncertainty is essential for the resilient operation of robots, drones, and spacecraft. This motivates methods for solving the \textit{robust nonlinear optimal control problem} (RNOCP), which seeks a control policy that \textit{robustly} guarantees constraint satisfaction under worst-case disturbances. In practice, RNOCP solvers typically optimize (1) a nominal trajectory and (2) a stabilizing feedback controller to track it under disturbance. Robust constraint satisfaction is enforced by tightening the nominal constraints using an overapproximation of the closed-loop tracking-error \textit{reachable tube}, ensuring that the closed-loop dynamics remain in a safe subset of the state and input space.
However, exact reachability via error propagation through nonlinear dynamics \cite{majumdar2017funnel, althoff2013reachability, bansal2017hamilton} is generally intractable in real time. To address this, \cite{leeman2025robust, schafer2025robust, leister2025robust, althoff2008reachability} approximate the nonlinear dynamics along a nominal trajectory with a linear time-varying (LTV) system and compute reachable sets using the LTV dynamics. System-level synthesis (SLS) \cite{DBLP:journals/arc/AndersonDLM19, leeman2025robust} is one such LTV robust control framework that can be solved in real time over long horizons for high-dimensional systems \cite{leeman2024fast, fang2026safe}.

Despite strong performance, LTV-based methods only guarantee robust constraint satisfaction given a bound on the \textit{linearization error}, which captures the deviation between the true nonlinear dynamics and their LTV approximation. Existing methods often use global bounds \cite{leeman2025robust, kim2024joint}, which yield conservative (loose) reachable tubes. Tighter \textit{local} bounds around the nominal trajectory are often possible, but must be computed \textit{efficiently} for real-time planning and informative enough to \textit{guide} the optimizer toward safe trajectories with minimal linearization error. Thus, the challenge is to develop bounds that are simultaneously (1) tight, (2) real-time computable, and (3) differentiable, so an RNOCP solver can use them to reshape the nominal trajectory and tighten the reachable tubes. As existing methods do not achieve all three, many LTV-based RNOCP solvers must omit linearization error, forgoing formal guarantees for performance \cite{zhan2026robustly, messerer2021efficient}.

\begin{figure}[t]
    \centering
    \includegraphics[width=\linewidth]{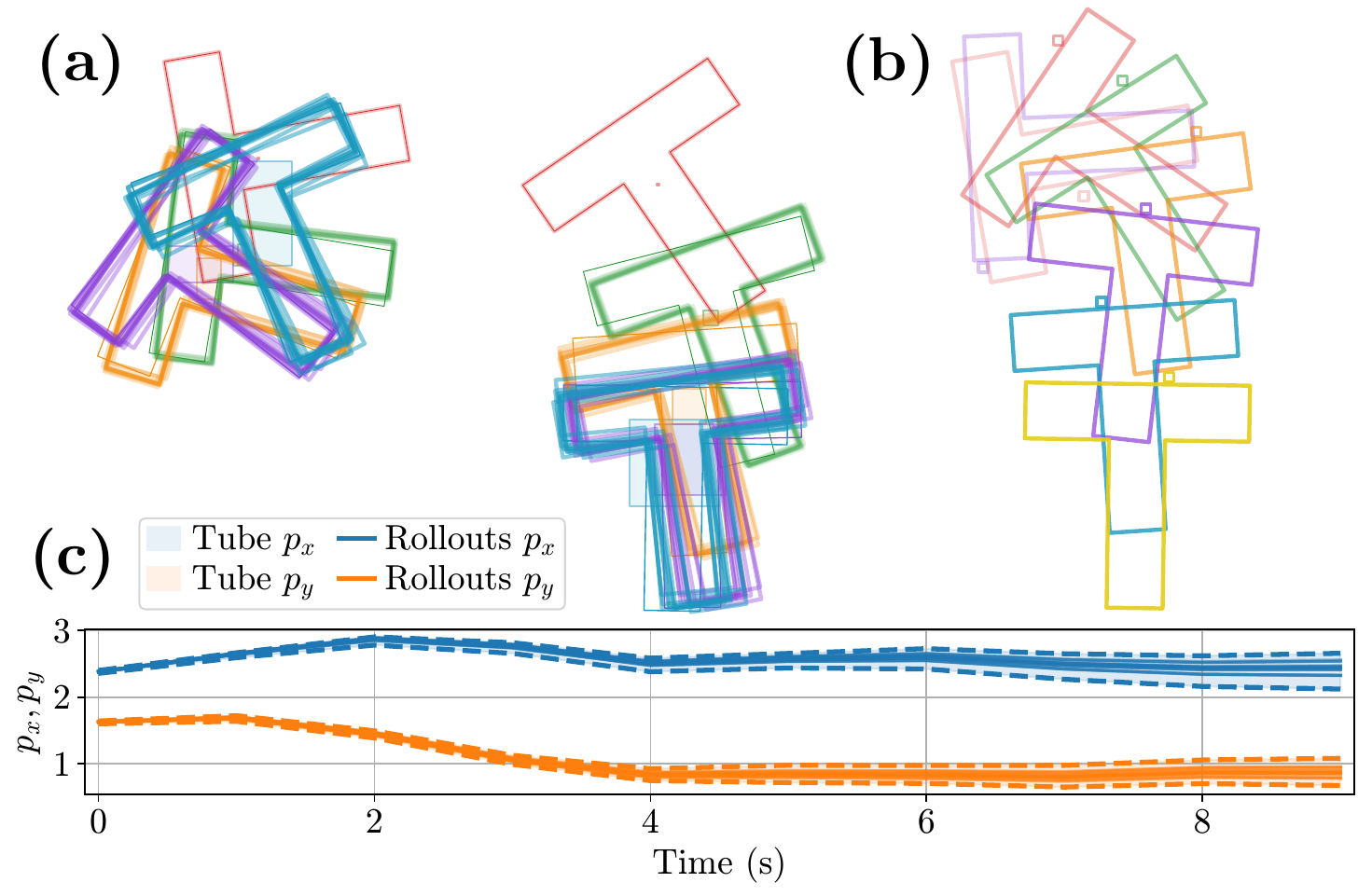}
    \caption{\textbf{(a)}: Robust tubes from GPUSLS-LEO for a neural T pusher system and disturbed rollouts for a rotation and a push trajectory against random disturbances. \textbf{(b)}: MPC rollout using our method on real dynamics, successfully moving the T to the goal. \textbf{(c)}: Robust tubes of $x$ and $y$ position showing tight tubes and all simulated rollouts staying within the tubes. 
    } \label{fig:neuralt_rollouts}
\end{figure}

To close this gap, we propose a family of tight, differentiable linearization error bounds (LEBs) for RNOCP solvers with nonlinear dynamics described by analytic functions and neural networks (NNs), with a GPU-parallel implementation in JAX. To make these LEBs compatible with the real-time SLS solver of \cite{fang2026safe, leeman2024fast}, we extend the solver 1) to accommodate right-invertible disturbance matrices, enabling direct zonotopic propagation of the LEBs while avoiding the conservative overapproximations in \cite{leeman2024fast}; and 2) to handle disturbance sets that are not zero-centered. Finally, we show empirically that these bounds enable real-time, GPU-based computation of formally verified RNOCP solutions, with limited conservativeness for systems with up to 168 states and solve times as low as 15 ms. Our contributions are:
\begin{itemize}[leftmargin=*, itemindent=0pt]
    \item A set of tight, provably-sound linearization error bounding methods, including path-based interval bounds for analytic systems and NN verifier-based bounds for NN dynamics.
    \item A novel nonlinear SLS formulation that includes linearization error through tight zonotopic propagation, including treatment of right-invertible disturbance matrices and non-zero-centered disturbance sets, guaranteeing robust constraint satisfaction for the original nonlinear system.
    \item \looseness-1A GPU-accelerated nonlinear SLS solver, \textbf{GPUSLS-LEO}, that uses a formulation of these linearization bounds in JAX to enable differentiable, GPU-parallel evaluation and gradient-based robust nonlinear optimal control.
    \item Evaluation on analytic and NN dynamics models, maintaining real-time control rates (up to 67 Hz) for systems up to 168D and problems with $\approx 2\times 10^5$ decision variables.
\end{itemize}

\section{Related Work}

\looseness-1While linearization-free reachability methods exist \cite{majumdar2017funnel, althoff2013reachability, bansal2017hamilton}, direct error propagation through nonlinear dynamics is generally intractable in real time, especially beyond $\approx$10 states \cite{bansal2017hamilton, majumdar2017funnel}. Thus, RNOCP solvers often rely on linearized error propagation for efficiency \cite{manchester2017dirtrel, messerer2021efficient}, but introduce linearization error that is difficult to tightly bound, and is often ignored \cite{manchester2017dirtrel, diehl2006approximation, messerer2021efficient, fang2026safe}, underapproximated via sampling \cite{kim2024joint, leeman2025robust}, or handled by assuming access to a linearization bound oracle \cite{houska2012robust, villanueva2017robust, cannon2009successive, cannon2011robust, richards2005robust}. 
RNOCP solvers that explicitly compute LEBs typically rely on global Lipschitz estimates \cite{kim2024joint} or interval arithmetic \cite{althoff2008reachability, schafer2025robust, leister2025robust, rungger2018accurate}. Interval arithmetic has been used both to compute reachable sets \cite{limon2005robust, shen2026parallel} and to support reachability by constructing linear differential inclusion bounds (LDI) for nonlinear dynamics \cite{harapanahalli2024immrax}. In particular, \cite{harapanahalli2025linear} uses Jacobian overbounding to obtain tighter LDIs for reachability, but is not used for control design or directly compute LEBs. Computationally, existing linearization bounding methods are offline \cite{leeman2025robust}, slow, or non-differentiable \cite{althoff2008reachability}, limiting their use for planning. To close these gaps, our method uses interval arithmetic and JAX to compute tight differentiable LEBs for direct use in optimal control. 
Moreover, we also adapt neural-network (NN) verification tools \cite{zhang2018efficient} to derive affine bounds on linearization error for NN dynamics. 

\looseness-1As LEBs can be slow to compute \cite{leeman2025robust}, they are often generated once offline \cite{leeman2025robust, kim2024joint} but are loose if we wish to bound error around \textit{one trajectory}, as in RMPC. Thus, practitioners often omit linearization error, yielding heuristic reachable tubes that trade formal guarantees for performance \cite{zhan2026robustly}.
To address this, we show that tight, local, GPU-parallel LEBs can be computed online while providing both real-time performance on high-dimensional systems and hard guarantees.

\section{Preliminaries and Problem Statement}
\looseness-1\noindent\textbf{Notation:} We denote $I_n \in \mathbb{R}^{n\times n}$ as the identity matrix, the Frobenius norm of a matrix $A \in \mathbb{R}^{m \times n}$ as $\| A \|_{\mathcal{F}} := \sqrt{\text{Trace}(A^\top A)}$. Let $[N]:=\{0,...,N-1\}$, and $[M,N]:=\{M,...,N\}$ for $M,N \in \mathbb{N}$. For a matrix $A \in \mathbb{R}^{n \times p}$ we define the row-wise $\ell_1$-norm $\| A \| _{1,\text{r}} \in \mathbb{R}^{n}$ by $\|A\|_{1, \text{r}} \coloneqq \left[ \|A_{1,:} \|_1, \ldots, \|A_{n,:} \|_1\right]^\top$. For $a \in \mathbb{R}^n$, we define $\text{diag}(a) \in\mathbb{R}^{n\times n}$ as its diagonal matrix. Given $A \in \mathbb{R}^{m \times n}$ with linearly independent rows, we define $A^{\ddagger} \in \mathbb{R}^{n \times m}$ such that $AA^{\ddagger} = I_{m}$, where $A^{\ddagger}:=A^{\top}(AA^{\top})^{-1}$. We denote the Minkowski sum of two sets $A$ and $B$ as $A \oplus B := \{a + b \mid a \in A, b \in B\}$. For a function $p(x)$ with $x \in \mathbb{R}^n$ (or explicitly $p\left((x_1,\dots,x_n)\right)$), we define {\small$\partial_{j}p(\alpha):=\frac{\partial p}{\partial x_j} |_{x = \alpha}$} and {\small$\partial_{jk}p(\alpha):=\frac{\partial^2 p}{\partial x_j\partial x_k} |_{x = \alpha}$} as the first and second partial derivatives with respect to the $j$-th and $k$-th components of the vector input. We define $\mathbb{I}_n$ as the set of all $n$-dimensional intervals. Let, $\Delta,\hat\Delta \in \mathbb{I}_n$, then $\Delta\cdot\hat\Delta := \{\delta \cdot  \hat\delta\mid \delta\in\Delta,\hat\delta\in\hat\Delta\}$. 

\noindent\textbf{Definitions:} We consider uncertain nonlinear dynamics
\begin{equation}
    \label{eq:nmpc_dynamics}
    x_{k+1} = f(x_k, u_k) + E(x_k) w_k,
\end{equation} 
where $x_k \in \mathcal{X}\subseteq \R^{\nx}$ is the system state at time step $k$, $u_k \in \mathcal{U}\subseteq \R^{\nuu}$ denotes the control input, $f: \mathbb{R}^{n_x} \times \mathbb{R}^{n_u} \rightarrow \mathbb{R}^{n_x}$ the dynamics function,
$E: \mathbb{R}^{n_x} \rightarrow \mathbb{R}^{n_x \times n_x}$ the disturbance scaling function, and $w_k \in \mathcal{E}_{n_x} \coloneqq \{ w \in \mathbb{R}^{n_x}, \|w\|_\infty \leq 1 \}$ the disturbance, normalized to be contained in a unit $\ell_{\infty}$-ball.

We aim to design a controller $\pi(\cdot)$ that solves the following robust nonlinear optimal control problem (RNOCP):
\begin{subequations} \label{eq:robust_nocp}
    \begin{align}
        \min_{\pi(\cdot)} \quad &J\left(\bar{x}, \pi(\cdot)\right) \\
        \text{s.t.} \quad & x_{k+1} = f(x_k, u_k) + E(x_k)w_k, \quad \forall k \in [N],
        \\
                          & x_0 = \bar{x}_0, \\ 
                          & u_k = \pi_k(x_{0},\dots,x_{k}), \quad \forall k \in [N], \\
                          & g_k(x_k, u_k) \leq 0,  \quad \forall w_k \in \mathcal{E}_{n_x},\quad\forall k \in [N], \label{eq:robust_nocp_con_running}\\
                          & g^f(x_N) \leq 0, \quad \forall w_k \in \mathcal{E}_{n_x},\quad\forall k \in [N].\label{eq:robust_nocp_con_terminal}
    \end{align}
\end{subequations}
\looseness-1where $\pi:=\{\pi_i\}_{i=0}^{N-1}$ is a sequence of causal control policies, $\bar x_0 \in \mathbb{R}^{\nx}$ is the initial state, functions $g_k: \mathbb{R}^{n_x} \times \mathbb{R}^{n_u} \rightarrow \mathbb{R}^{n_c}$ denote stagewise state-input constraints, and $g^f: \mathbb{R}^{n_x} \rightarrow \mathbb{R}^{n_f}$ is the terminal constraint. We define the closed-loop reachable set $\Omega_k :=(\Omega_k^x, \Omega_k^u)$ of states and controls as 
\begin{equation}\label{eq:reach}\small
    \hspace{-7pt}\Omega_k :=
\left\{
(x_k, u_k) \;\middle|\;
\begin{aligned}
&x_{j+1} = f(x_j,u_j) + E(x_j) w_j,\ \ \forall j\in[k],\\
&\qquad \forall w_j \in \mathcal{E}_{\nx},\quad \forall j \in [k], \\
& u_j=\pi_j(x_0, \ldots, x_j), \quad \forall j\in [k+1].
\end{aligned}
\right\}.
\end{equation}

\looseness-1
Because \eqref{eq:robust_nocp} is an intractable infinite-dimensional problem, RNOCP solvers often simplify it by computing (A) a nominal state-input trajectory and (B) a tracking controller around that trajectory. In this paper, we use one such method, GPUSLS \cite{fang2026safe}, which adapts \cite{leeman2024fast} for real-time solution on the GPU. We then extend \cite{fang2026safe} to account for linearization error in Sec. \ref{sec:method_sls}. Our approach uses SLS, which we now review.

\subsection{System Level Synthesis (SLS)}\label{sec:sls}

Consider the following uncertain LTV dynamics
\begin{equation}\label{eq:dynamics_ltv}
    x_{k+1} = A_kx_k + B_ku_k+E_kw_k.
\end{equation}
SLS is a control design framework that optimizes over causal \textit{disturbance feedback} controllers
\begin{equation}\label{eq:disturbance_feedback}
u_k = v_k + \sum_{j=0}^{k-1} \Phiu_{k,j} E_jw_j,
\end{equation}
\looseness-1with nominal control $v_k \in \R^{\nuu}$, i.e., we assign a disturbance feedback matrix $\Phiu_{k,j}\in \R^{\nuu\times \nx}$ for each disturbance $E_jw_j$ and control $u_k$ for $k>j$.
For LTV dynamics \eqref{eq:dynamics_ltv}, it can be shown using algebraic manipulations \cite{DBLP:journals/arc/AndersonDLM19} that the resulting closed-loop state sequence can be expressed as
\begin{equation}\label{eq:SLS_state}
x_k= z_k   + \sum_{j=0}^{k-1} \Phix_{k,j} E_jw_j,~ z_0 = \bar x_0,
\end{equation}
\looseness-1where $z_k \in \mathbb{R}^{\nx}$ is the nominal state and the closed-loop response \mbox{$\Phix_{k,j}\in \R^{\nx\times \nx}$} captures the effect of disturbance $w_j$ on state $x_k$.
Starting with $\Phix_{j+1,j} = I_{n_x}$, SLS propagates the disturbance via $\Phix_{k+1,j} = A_k \Phix_{k,j} + B_k\Phiu_{k,j}$, 
for all $j \in [N]$ and $k \in [j+1,N-1]$. Since \eqref{eq:disturbance_feedback}--\eqref{eq:SLS_state} are the \textit{true} closed-loop trajectory under a given disturbance sequence, the \textit{exact} closed-loop reachable sets are $\Omega_x^k = z_k + \textstyle\bigoplus_{j=0}^{k-1}\Phix_{k,j}E_j \mathcal{E}_{\nx}$ and $\Omega_k^u = v_k+\textstyle\bigoplus_{j=0}^{k-1}\Phiu_{k,j}E_j \mathcal{E}_{\nx}$. 
To apply the methods of Sec.~\ref{sec:sls} to the nonlinear dynamics \eqref{eq:nmpc_dynamics}, we approximate \eqref{eq:nmpc_dynamics} with the nominal dynamics $z_{k+1} = f(z_k,v_k)$ and LTV error dynamics along a nominal state-control trajectory $\mathbf{z} := \{z_k\}_{k=0}^N \in \mathbb{R}^{\nx \times (N+1)}$, $\mathbf{v} := \{v_k\}_{k=0}^{N-1} \in \mathbb{R}^{\nuu \times N}$. For a nominal $(z_k, v_k) \in \mathbb{R}^{\nx} \times \mathbb{R}^{\nuu}$, we construct an LTV approximation of the form \eqref{eq:dynamics_ltv} by defining
\begin{equation}\label{eq:linearizations}
     A_k := \left.\frac{\partial f}{\partial x}\right|_{(z_k,v_k)}, B_k := \left.\frac{\partial f}{\partial u}\right|_{(z_k,v_k)}, E_k := E(z_k).
\end{equation}

Denote $\delta z_k := x_k - z_k$ and $\delta v_k := u_k - v_k$. Then, the residual error $d_k(x_k, u_k, w_k)$ between \eqref{eq:nmpc_dynamics} and its first-order Taylor approximation about $(z_k, v_k)$, i.e. $f(z_k, v_k) + A_k \delta z_k + B_k \delta v_k$, can be written as,
\begin{subequations}\label{eq:linearization_error}
\begin{align}
d_k(x_k,&u_k,w_k) := E_k w_k + \underbrace{r_k^{\mathrm{lin}} + r_k^{\mathrm{dist}}}_{\qquad:=r(x_k,u_k,w_k)}, \label{eq:linearization_error_all}\\[-5mm]
r_k^{\mathrm{dist}} &:= \bigl(E(x_k)-E_k\bigr) w_k,\label{eq:linearization_error_dist}\\
r_k^{\mathrm{lin}} &:=
f(x_k,u_k)
- f(z_k,v_k)
- A_k \delta z_k 
- B_k \delta v_k. \label{eq:linearization_error_lagrange}
\end{align}
\end{subequations}
Specifically, \eqref{eq:linearization_error_lagrange} captures the Taylor remainder of the nonlinear dynamics $f$, while \eqref{eq:linearization_error_dist} accounts for the error induced by the state dependence of the disturbance matrix $E$. Given this setup, we study the following problems.

\subsection{Problem Statement}
\looseness-1\noindent\textbf{Problem 1 (Linearization Error Bound).}
For all $k \in [N]$, find a set-valued overapproximation $\bar{\mathcal{E}}_k \subseteq \mathbb{R}^{\nx}$ and associated scaling matrix $\bar R_k \in \mathbb{R}^{n_x \times n_x}$ and offset $\bar c_k \in \mathbb{R}^{\nx}$ where
\begin{equation}\label{eq:lin_bound}
\begin{aligned}
r_k(x_k,u_k,w_k) &\in \bar{\mathcal{E}}_k \subseteq \big( \bar R_k\mathcal{E}_{\nx} \oplus \bar c_k \big), \\
&\forall (x_k, u_k) \overset{\eqref{eq:reach}}{\in} \Omega_k,\ \forall w_k \in \mathcal{E}_{\nx}.
\end{aligned}
\end{equation}

\noindent\textbf{Problem 2 (Real-time RNOCP).}
Using $\bar{\mathcal{E}}_k$ to account for the discrepancy between \eqref{eq:nmpc_dynamics} and its LTV approximation, efficiently solve the RNOCP \eqref{eq:robust_nocp} using the LTV surrogate model \eqref{eq:dynamics_ltv} to robustly satisfy the constraints \eqref{eq:robust_nocp_con_running} and \eqref{eq:robust_nocp_con_terminal}. 

\section{GPU-Parallel Linearization Error Bounds}\label{sec:method}

To streamline exposition we consider a vector-valued function $\f(\augx): \RR^n \to \RR^m$, over the augmented state $\augx := [x^\top, u^\top]^\top$. Let $\cX \subset \RR^n$ be a compact set and $\nomx \in \cX$ be a nominal point. The linearization error, i.e., the Lagrange remainder of the first-order Taylor expansion of $\f$ at $\nomx$ is, 
\begin{equation}
\rem(\augx;\nomx) := \f(\augx) - \f(\nomx) - \J(\nomx)(\augx - \nomx), \quad \augx \in \cX.
\label{eq:linear_error_def}
\end{equation}
For the specific case when $\f(\augx) = f(x, u)$, \eqref{eq:linear_error_def} coincides with \eqref{eq:linearization_error_lagrange}. Our goal is to compute a sound componentwise interval enclosure $\Rem \in \mathbb{I}_m$ of $\rem(\cX;\nomx)$, i.e., $\rem(\cX;\nomx) \subseteq \Rem$.

\subsection{Linearization Error Bounds for Analytic Dynamics}\label{sec:method_linearization_analytical}

By Taylor's theorem in Lagrange form, for each component $\ell \in \{1,\dots,m\}$, there exists $\xi \in \cX$ such that
\begin{equation}
\rem_\ell(\augx;\nomx) = \tfrac{1}{2} \delta^\top \Hess_\ell(\xi) \delta,
\quad \delta := \augx - \nomx.
\label{eq:quadratic_remainder}
\end{equation}

\looseness-1where $\Hess_\ell(\xi)$ is the $\ell$-th item of the Hessian $\Hess$ at $\xi$. To evaluate the interval enclosure of $\rem_\ell(\cX;\nomx)$, previous work~\cite{althoff2008reachability, schafer2025robust, leister2025robust, rungger2018accurate}, separately evaluate $\delta^\top, \Hess_\ell(\xi)$ and $\delta$ in interval arithmetic, and then perform interval products as in~\eqref{eq:interval_quadratic_form}, yielding the enclosure $\Rem_{\ell, \mathrm{classic}}$:
\begin{equation}
\rem_\ell(\cX;\nomx) \subseteq \Rem_{\ell, \mathrm{classic}} := \tfrac{1}{2} \Delta^\top [\Hess_\ell](\cX) \Delta,
\label{eq:interval_quadratic_form}
\end{equation}
where $\Delta := \cX \ominus \nomx \subset \RR^n$ denotes the displacement set and $[\Hess_\ell](\cX)$ is an interval enclosure of the Hessian over $\cX$ with entries \([\Hess_{\ell,jk}](\cX)\) (where \(\Hess_{\ell,jk}(\augx):=\partial_{jk}\f_\ell(\augx)\)). In practice, $\nomx$ is usually chosen in the center of $\cX$~\cite{althoff2008reachability}, leading to a symmetric $\Delta$ for smaller enclosure of $\rem_\ell(\cX;\nomx)$.

Retaining the tight quadratic form, ~\eqref{eq:quadratic_remainder} can also be first expanded componentwise, followed by performing interval arithmetic on the expansion. It yields~\eqref{eq:expanded_interval_hessian}, where interval $\Delta_j$ denotes the $j$-th slice of $\Delta$, after the symmetry of the interval Hessian is applied. When $\Delta$ is symmetric, $\Rem_{\ell, \mathrm{expand}} \subseteq \Rem_{\ell, \mathrm{classic}}$ is guaranteed because~\eqref{eq:expanded_interval_hessian} preserves the non-negativity of $\delta_k^2$, for all $\delta_k \in \Delta_k$ and for all $k \in \{1,\ldots,n\}$:
\begin{align}
\rem_\ell(\cX;\nomx) \subseteq\; \Rem_{\ell, \mathrm{expand}} = 
&
\textstyle\sum_{\substack{1 \le j < k \le n}} 
[\Hess_{\ell,kj}](\cX)\,\Delta_j\cdot\Delta_k \nonumber\\
+ &\tfrac{1}{2}
\textstyle\sum_{k=1}^n [\Hess_{\ell,kk}](\cX)\,\Delta_k^2.\label{eq:expanded_interval_hessian}
\end{align}

\looseness-1However, the classical bounds in~\eqref{eq:interval_quadratic_form} and~\eqref{eq:expanded_interval_hessian} evaluate all interval enclosures of the Hessian entries over \textit{the full set} $\cX$, causing conservativeness due to dependency and wrapping effects. To mitigate this, we use a \emph{coordinate-wise path} from $\nomx$ to $\augx$ that enables tighter bounds on second-order terms. 
For any $\augx \in \cX$, define the sequence of intermediate points
\begin{equation}
\pathx^{(k)} := (\augx_1, \dots, \augx_k, \nomx_{k+1}, \dots, \nomx_n),
\quad k\in [n+1],
\end{equation}
with $\pathx^{(0)} = \nomx$ and $\pathx^{(n)} = \augx$. This defines a \emph{coordinate-wise} path~\cite{harapanahalli2024immrax, harapanahalli2025linear} that incrementally moves from $\nomx$ to $\augx$ along each coordinate. Correspondingly, define the prefix sets $\cX^{(k)}(\nomx) := \{ \augx \in \cX \mid \augx_{\ell} = \tilde\augx_{\ell}, \forall \ell \in [k+1,n] \}$.

\begin{theorem}[Path-Based Hessian Bound]
\label{thm:path_hessian}
Let $\f:\RR^n \to \RR^m$ be twice continuously differentiable on an interval $\cX \subset \RR^n$, and let $\nomx \in \cX$. Define the displacement set $\Delta := \cX \ominus \nomx$. For each $\augx \in \cX$, define the coordinate-wise path $\augx^{(k)}$
and the corresponding prefix sets $\cX^{(k)}(\nomx)$.
Then, for each component $\ell \in \{1,\dots,m\}$, the linearization error satisfies
\begin{align}\nonumber
\hspace{-6pt}\rem_\ell(\cX;\nomx) \subseteq \Rem_{\ell, \mathrm{path}}&\hspace{-3pt} :=\hspace{-3pt}
\textstyle\sum_{1 \le j < k \le n}
[\Hess_{\ell,kj}](\cX^{(j)}(\nomx))\,\Delta_j\cdot\Delta_k \\
&+ \tfrac{1}{2}
\textstyle\sum_{k=1}^n [\Hess_{\ell,kk}](\cX^{(k)}(\nomx))\,\Delta_k^2.\label{eq:path_based_hessian}
\end{align}
\end{theorem}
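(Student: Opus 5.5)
The plan is to replace the single Lagrange remainder \eqref{eq:quadratic_remainder}, whose intermediate point $\xi$ can be anywhere in $\cX$, by a double telescoping along the coordinate-wise path $\augx^{(0)}=\nomx,\dots,\augx^{(n)}=\augx$. Each resulting second-derivative evaluation point will then be confined to a prefix set. Fix a component $\ell$, a point $\augx\in\cX$, and set $\delta:=\augx-\nomx$, so that $\delta_k\in\Delta_k$. Everything is scalar, since we work with $\f_\ell$, so the one-dimensional mean value theorem and the one-dimensional Lagrange remainder are available.

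\textbf{Step 1: outer telescoping.} Write $\f_\ell(\augx)-\f_\ell(\nomx)=\sum_{k=1}^n\big[\f_\ell(\augx^{(k)})-\f_\ell(\augx^{(k-1)})\big]$. Consecutive points $\augx^{(k-1)}$ and $\augx^{(k)}$ differ only in coordinate $k$, by $\delta_k$. A one-dimensional second-order Taylor expansion in that coordinate therefore gives
\[
\f_\ell(\augx^{(k)})-\f_\ell(\augx^{(k-1)})=\partial_k\f_\ell(\augx^{(k-1)})\,\delta_k+\tfrac12\,\partial_{kk}\f_\ell(\xi_k)\,\delta_k^2 .
\]
Here $\xi_k$ lies on the segment between $\augx^{(k-1)}$ and $\augx^{(k)}$.

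\textbf{Step 2: inner telescoping of the first-order term.} Subtract the linear term $\J(\nomx)\delta=\sum_k\partial_k\f_\ell(\nomx)\delta_k$, which leaves $\sum_k\big[\partial_k\f_\ell(\augx^{(k-1)})-\partial_k\f_\ell(\nomx)\big]\delta_k$. Telescope each bracket along the path again, $\sum_{j=1}^{k-1}\big[\partial_k\f_\ell(\augx^{(j)})-\partial_k\f_\ell(\augx^{(j-1)})\big]$. Apply the mean value theorem in coordinate $j$ to get $\partial_{jk}\f_\ell(\eta_{jk})\,\delta_j$, where $\eta_{jk}$ lies on the segment between $\augx^{(j-1)}$ and $\augx^{(j)}$. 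Since $\f$ is $C^2$, mixed partials commute, so $\partial_{jk}\f_\ell=\Hess_{\ell,kj}$. Collecting terms gives the exact identity
\[
\rem_\ell(\augx;\nomx)=\textstyle\sum_{1\le j<k\le n}\Hess_{\ell,kj}(\eta_{jk})\,\delta_j\delta_k+\tfrac12\sum_{k=1}^n\Hess_{\ell,kk}(\xi_k)\,\delta_k^2 .
\]

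\textbf{Step 3: locating the intermediate points.} This is the step I expect to need the most care, because it is where the tightening over \eqref{eq:expanded_interval_hessian} comes from. Any point on the segment $[\augx^{(j-1)},\augx^{(j)}]$ has coordinates $j+1,\dots,n$ equal to those of $\nomx$. Its first $j$ coordinates lie in the corresponding coordinate intervals of $\cX$, because $\cX$ is an interval (a box, hence convex and a product of its coordinate projections) and both endpoints lie in $\cX$. Hence $\eta_{jk}\in\cX^{(j)}(\nomx)$ and $\xi_k\in\cX^{(k)}(\nomx)$. I would check carefully that it is the index $j$, the coordinate currently moving in the inner telescope, and not $k$, that determines the prefix set for the mixed terms. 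This matches the statement \eqref{eq:path_based_hessian}.

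\textbf{Step 4: passing to intervals.} By definition of interval enclosures we have $\Hess_{\ell,kj}(\eta_{jk})\in[\Hess_{\ell,kj}](\cX^{(j)}(\nomx))$ and $\Hess_{\ell,kk}(\xi_k)\in[\Hess_{\ell,kk}](\cX^{(k)}(\nomx))$. We also have $\delta_j\delta_k\in\Delta_j\cdot\Delta_k$ and $\delta_k^2\in\Delta_k^2$, the latter understood as the set of squares, which keeps the nonnegativity. Inclusion isotonicity of interval multiplication and addition then places $\rem_\ell(\augx;\nomx)$ in $\Rem_{\ell,\mathrm{path}}$. Since $\augx\in\cX$ was arbitrary, this gives $\rem_\ell(\cX;\nomx)\subseteq\Rem_{\ell,\mathrm{path}}$.
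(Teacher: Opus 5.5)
Your proposal is correct and follows the paper's proof essentially step for step. Both use an outer telescoping along the coordinate-wise path with a one-dimensional Lagrange remainder, then an inner telescoping of the gradient differences via the mean value theorem, then place $\eta_{j,k}\in\cX^{(j)}(\nomx)$ and $\xi_k\in\cX^{(k)}(\nomx)$ before taking interval enclosures; your remarks on the symmetry of mixed partials and on the box structure keeping the path segments inside $\cX$ make explicit details the paper leaves implicit.
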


\begin{proof}
Fix $\augx \in \cX$ and define $\delta := \augx - \nomx$. Using telescoping along the coordinate-wise path, we have 
$\f_\ell(\augx) - \f_\ell(\nomx)
= \textstyle\sum_{k=1}^n \big( \f_\ell(\pathx^{(k)}) - \f_\ell(\pathx^{(k-1)}) \big)$.
For each $k$, applying Taylor's theorem in 1D yields
\begin{equation}\label{eq:path_diff}
\f_\ell(\pathx^{(k)}) - \f_\ell(\pathx^{(k-1)})
= \partial_k \f_\ell(\pathx^{(k-1)}) \delta_k
+ \tfrac{1}{2} \partial_{kk} \f_\ell(\xi_k)\, \delta_k^2,
\end{equation}
\looseness-1for some $\xi_k \in [\pathx^{(k-1)},\pathx^{(k)}] \subseteq \cX^{(k)}(\nomx)$. 
Summing \eqref{eq:path_diff} over $k$ and subtracting $\nabla \f_\ell(\nomx)^\top \delta$ yields $\f_\ell(\augx) - \f_\ell(\nomx)-\nabla \f_\ell(\nomx)^\top \delta$ on the LHS which is $\rem_\ell(\augx;\nomx)$ and the RHS is in 
$\textstyle\sum_{k=1}^n \big( \partial_k \f_\ell(\pathx^{(k-1)}) - \partial_k \f_\ell(\nomx) \big)\delta_k + \tfrac{1}{2} \textstyle\sum_{k=1}^n \partial_{kk} \f_\ell(\xi_k)\, \delta_k^2$. 
Expanding the gradient difference along the same path for $k$, we have 
$\partial_k \f_\ell(\pathx^{(k-1)}) - \partial_k \f_\ell(\nomx)
= \sum_{j=1}^{k-1}
\big( \partial_k \f_\ell(\pathx^{(j)}) - \partial_k \f_\ell(\pathx^{(j-1)}) \big)$.
By the Mean Value Theorem, for each $j<k$, there exists $\eta_{j,k} \in [\pathx^{(j-1)},\pathx^{(j)}] \subseteq \cX^{(j)}$ such that 
$\partial_k \f_\ell(\pathx^{(j)}) - \partial_k \f_\ell(\pathx^{(j-1)})
= \partial_{kj} \f_\ell(\eta_{j,k})\, \delta_j$.
Substituting yields
$\rem_\ell(\augx;\nomx)
= \sum_{1 \le j < k \le n} \partial_{kj} \f_\ell(\eta_{j,k})\, \delta_j \delta_k
+ \tfrac{1}{2} \sum_{k=1}^n \partial_{kk} \f_\ell(\xi_k)\, \delta_k^2$. 
Since $\eta_{j,k} \in \cX^{(j)}(\nomx)$, $\xi_k \in \cX^{(k)}(\nomx)$, and $\delta \in \Delta$, taking interval enclosures yields~\eqref{eq:path_based_hessian}.
\end{proof}

Thm.~\ref{thm:path_hessian} implies that our path-based Hessian bound is never looser than the expanded interval Hessian bound, as each Hessian entry is evaluated over a subset of the full domain.

\begin{corollary}[Path-Based Hessian is Never Looser]\label{cor:path_not_looser} When $\Delta$ is symmetric, 
the enclosure in~\eqref{eq:path_based_hessian} is guaranteed to be no looser than~\eqref{eq:expanded_interval_hessian} and~\eqref{eq:interval_quadratic_form}. That is,
\begin{equation}
    \Rem_{\ell, \mathrm{path}} \subseteq \Rem_{\ell, \mathrm{expand}} \subseteq \Rem_{\ell, \mathrm{classic}}.
\end{equation}
    
\end{corollary}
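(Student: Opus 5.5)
The plan is to prove the two inclusions separately, using only inclusion monotonicity of interval arithmetic and the structure of symmetric intervals.

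\textbf{Step 1: $\Rem_{\ell,\mathrm{path}} \subseteq \Rem_{\ell,\mathrm{expand}}$.}
By definition, the prefix sets satisfy $\cX^{(j)}(\nomx) \subseteq \cX$ for every $j$. Any sound interval enclosure scheme (natural interval extension, or exact range) is inclusion-monotone. Hence
\[
[\Hess_{\ell,kj}](\cX^{(j)}(\nomx)) \subseteq [\Hess_{\ell,kj}](\cX)
\quad\text{and}\quad
[\Hess_{\ell,kk}](\cX^{(k)}(\nomx)) \subseteq [\Hess_{\ell,kk}](\cX).
\]
Interval multiplication and interval addition are also inclusion-monotone in each argument. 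So each summand of \eqref{eq:path_based_hessian} is contained in the corresponding summand of \eqref{eq:expanded_interval_hessian}, and the sums are nested. This step does not need symmetry of $\Delta$.

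\textbf{Step 2: $\Rem_{\ell,\mathrm{expand}} \subseteq \Rem_{\ell,\mathrm{classic}}$.}
This step uses symmetry. Write $\Delta_j = [-a_j, a_j]$ and $[\Hess_{\ell,jk}](\cX) = [\underline h_{jk}, \overline h_{jk}]$.

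First I expand $\Rem_{\ell,\mathrm{classic}} = \tfrac12 \Delta^\top [\Hess_\ell](\cX)\Delta$ as interval arithmetic prescribes. It is a sum over all ordered pairs $(j,k)$ of $\tfrac12\, \Delta_j \cdot [\Hess_{\ell,jk}](\cX) \cdot \Delta_k$.

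A product of a symmetric interval with any interval is symmetric. Concretely:
\begin{itemize}
\item for $j \neq k$, the term $\Delta_j [\Hess_{\ell,jk}]\Delta_k$ equals $[-m_{jk}, m_{jk}]$ with $m_{jk} = a_j a_k \max(|\underline h_{jk}|, |\overline h_{jk}|)$;
\item for $j = k$, the classic diagonal term is also $[-m_{kk}, m_{kk}]$, because $\Delta_k \cdot \Delta_k = [-a_k^2, a_k^2]$ in interval arithmetic.
\end{itemize}
Using $[\Hess_{\ell,jk}] = [\Hess_{\ell,kj}]$, the two off-diagonal ordered pairs with weight $\tfrac12$ each combine into $[-m_{jk}, m_{jk}]$. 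This matches exactly the magnitude of the expand term $[\Hess_{\ell,kj}]\,\Delta_j\cdot\Delta_k$, which is likewise symmetric with half-width $m_{jk}$.

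For the diagonal, the expand version uses $\Delta_k^2 = [0, a_k^2]$, the tight range of a square. This gives $\tfrac12 [\Hess_{\ell,kk}]\,[0, a_k^2]$, whose endpoints are $\tfrac12 a_k^2 \min(\underline h_{kk}, 0)$ and $\tfrac12 a_k^2 \max(\overline h_{kk}, 0)$. That interval is contained in $[-\tfrac12 m_{kk}, \tfrac12 m_{kk}]$.

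So the expand version and the classic version agree termwise off the diagonal, and the diagonal terms are nested. Summing preserves inclusion.

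\textbf{Main obstacle.}
The delicate point is Step 2. I must fix the evaluation convention for $\Delta^\top [\Hess]\Delta$: either row-then-column products, or a termwise sum. I also need to check that the symmetric-interval algebra makes the off-diagonal terms coincide rather than merely nest.

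I would handle this by the following argument. Under either evaluation order, the result is a symmetric interval. Its half-width is at least $\sum_{j,k} \tfrac12 a_j a_k \max|[\Hess_{\ell,jk}]|$, since sub-distributivity only widens. This dominates the expanded sum, so the inclusion holds regardless of which convention is used.
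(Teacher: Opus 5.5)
Your proposal is correct. Step 1 is exactly the paper's proof. The paper notes that $\cX^{(j)}(\nomx)\subseteq\cX$ nests each Hessian enclosure, and that multiplying by $\Delta_j\cdot\Delta_k$ or $\Delta_k^2$ and summing preserves inclusion. It then stops at $\Rem_{\ell,\mathrm{path}}\subseteq\Rem_{\ell,\mathrm{expand}}$.

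The paper never proves the second inclusion inside the proof. It relies only on the one-line remark accompanying \eqref{eq:expanded_interval_hessian}, that the expanded form ``preserves the non-negativity of $\delta_k^2$.'' Your Step 2 turns that remark into an actual argument, and it is correct:
\begin{itemize}
\item off-diagonal contributions coincide once $[\Hess_{\ell,jk}]=[\Hess_{\ell,kj}]$;
\item diagonal contributions nest, since $[\Hess_{\ell,kk}]\cdot[0,a_k^2]\subseteq[\Hess_{\ell,kk}]\cdot[-a_k^2,a_k^2]$.
\end{itemize}
You are also right that symmetry of $\Delta$ is needed only there.

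One justification in your last paragraph is backwards. Subdistributivity, $X(Y+Z)\subseteq XY+XZ$, says a factored (row-then-column) evaluation can only be \emph{narrower} than the termwise one. So it cannot be what guarantees that every evaluation order has half-width at least $\tfrac12\sum_{j,k}a_ja_k\max|[\Hess_{\ell,jk}]|$.

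The correct reason is that, with symmetric $\Delta$, every partial product and partial sum is a zero-centered interval. For zero-centered $Y,Z$ and $a\ge 0$, distributivity holds exactly: $[-a,a](Y+Z)=[-a,a]Y+[-a,a]Z$. Hence all evaluation orders of $\tfrac12\Delta^\top[\Hess_\ell](\cX)\Delta$ give the same zero-centered interval, with half-width equal to that sum. With this replacement your argument is complete.

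A minor point: ``any sound enclosure scheme is inclusion-monotone'' is stronger than true in general. What you need, and what the paper also tacitly assumes, is isotonicity of the enclosure actually used. That holds for the natural interval extension you cite.
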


\begin{proof}
For each $j \in \{1,\dots,n\}$, the prefix set satisfies $\cX^{(j)} \subseteq \cX$ by construction. Then, for every Hessian entry,
\begin{equation}\nonumber
[\Hess_{\ell,kj}](\cX^{(j)}) \subseteq [\Hess_{\ell,kj}](\cX),
[\Hess_{\ell,kk}](\cX^{(k)}) \subseteq [\Hess_{\ell,kk}](\cX).
\end{equation}

\noindent Multiplying by $\Delta_j \Delta_k$ and $\Delta_k^2$ preserves inclusion and summing the resulting terms yields {$\Rem_{\ell, \mathrm{path}} \subseteq \Rem_{\ell, \mathrm{expand}}$}.
\end{proof}

Unlike $\Rem_{\ell, \mathrm{classic}}$, $\Rem_{\ell, \mathrm{expand}}$ and $\Rem_{\ell, \mathrm{path}}$ preserve componentwise expansion to produce tight, off-centered intervals (via the quadratic form rather than symmetric bounds). We exploit the added tightness in the RNOCP setting (Sec. \ref{sec:method_sls}).

\subsection{Linearization Error Bounds for Neural Dynamics}\label{sec:method_linearization_neural}

For NN dynamics $f$, computing a tight interval of the Hessian can be difficult.
Instead, we use CROWN~\cite{zhang2018efficient,xu2020automatic} to obtain affine bounds of an NN over a box input set, and convert them into an interval enclosure of  \(\rem(\augx;\nomx)\).

\begin{theorem}[CROWN bounds (Thm. 3.2 \cite{zhang2018efficient})]
\label{theo:crown}\looseness-1
Given an NN, $\f: \RR^n \to \RR^m$, and a bounded input \(\augx\in\cX\) where \(\cX\) is an interval, there exist linear lower/upper bounds on \(\f(\augx)\):
\begin{equation}
\lowerW \augx + \lowerb \leq \f(\augx) \leq \upperW \augx + \upperb,
\label{eq:crown_bounds}
\end{equation}
where \(\lowerW, \upperW\in \RR^{m\times n}\) and \(\lowerb, \upperb\in \RR^{m}\).
\end{theorem}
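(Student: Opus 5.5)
The plan is to prove the statement constructively, following the standard CROWN argument. I take $\f$ to be a feedforward ReLU (or general piecewise-monotone activation) network, $\f(\augx) = W^{(L)}\sigma(W^{(L-1)}\cdots \sigma(W^{(1)}\augx + b^{(1)})\cdots) + b^{(L)}$, with compact interval input domain $\cX$. The argument is a backward induction over layers, maintaining a linear bound on $\f$ in terms of the pre-activations of each layer. The base case is the output layer, where $\f$ is itself affine in the last post-activation vector.

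The first step is to obtain, for every hidden neuron, an interval $[l_i^{(k)}, u_i^{(k)}]$ on its pre-activation over $\cX$. I would compute these layer by layer, either with plain interval arithmetic or by applying the CROWN bound being proved to the truncated subnetwork (an induction on depth). Both routes give finite bounds because $\cX$ is bounded.

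The second step is the key lemma. On $[l,u]$, each scalar activation $\sigma$ admits affine relaxations $\alpha^L z + \beta^L \le \sigma(z) \le \alpha^U z + \beta^U$. For ReLU I will handle three cases:
\begin{itemize}
\item if $u \le 0$, take both bounds equal to $0$;
\item if $l \ge 0$, take both bounds equal to $z$;
\item otherwise, take the chord through $(l,0)$ and $(u,u)$ as the upper bound and any slope in $[0,1]$ through the origin as the lower bound.
\end{itemize}
Soundness in the third case follows from convexity of ReLU, which places the function below the chord on $[l,u]$ and above every supporting line through the origin.

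The third step is the backward propagation. Suppose we have $\f(\augx) \le \Lambda h^{(k)} + c$, where $h^{(k)} = \sigma(z^{(k)})$. For each row, I replace $\sigma(z_i^{(k)})$ by its upper relaxation when the coefficient $\Lambda_{ri}$ is nonnegative and by its lower relaxation when it is negative. This sign-splitting preserves the inequality, and the result is affine in $z^{(k)} = W^{(k)} h^{(k-1)} + b^{(k)}$. Substituting this expression gives an affine bound in $h^{(k-1)}$, and repeating down to $h^{(0)} = \augx$ yields $\upperW, \upperb$. The lower bound is obtained symmetrically, with the roles of the relaxations swapped according to the sign of each coefficient. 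Every step composes affine maps, so the final bounds are affine in $\augx$ and valid for all $\augx \in \cX$.

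I expect the main obstacle to be the bookkeeping in the sign-dependent relaxation choice, together with ensuring the intermediate bounds are valid over all of $\cX$. Once the scalar relaxation lemma holds, the induction is mechanical. Since this is exactly Thm.~3.2 of \cite{zhang2018efficient}, in the paper I would simply cite it rather than reprove it.
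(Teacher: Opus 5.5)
Your proposal is correct and takes essentially the paper's route. The paper gives no proof of its own and simply invokes Thm.~3.2 of \cite{zhang2018efficient}; the proof behind that result is exactly your backward, row-wise sign-split substitution of per-neuron affine relaxations (chord upper bound and slope-in-$[0,1]$ lower bound for unstable ReLUs), driven by pre-activation bounds that hold over the whole box. As a side remark, the statement as literally written only asserts existence, which already follows with $\lowerW=\upperW=0$ and constant offsets because the continuous network is bounded on the bounded box; your constructive version is what the paper actually needs to compute $(\lowerW,\lowerb,\upperW,\upperb)$ in Thm.~\ref{thm:crown_remainder}.
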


\looseness-1The bound in~\eqref{eq:crown_bounds} does not directly bound \(\rem(\augx;\nomx)\), since the slopes \(\lowerW,\upperW\) generally do not match the Jacobian \(\J(\nomx)\). We address this via further relaxation in Thm.~\ref{thm:crown_remainder}.

\begin{theorem}[Linearization error bound from CROWN]
\label{thm:crown_remainder}
Let \(\f:\RR^n\to\RR^m\) be an NN, \(\cX\subset\RR^n\) be a box set, \(\nomx\in\cX\), \(\J(\nomx)\) be the Jacobian of \(\f\) at \(\nomx\), and let \((\lowerW,\lowerb,\upperW,\upperb)\) satisfy the CROWN bounds~\eqref{eq:crown_bounds} on \(\cX\). Then, \(\forall\augx\in\cX\):
\begin{equation}
\begin{aligned}
\remlow(\augx) \le \rem(\augx;\nomx) \le \remup(\augx),
\end{aligned}
\end{equation}
\begin{equation}
\begin{aligned}
\remlow(\augx)
&:= (\lowerW-\J(\nomx))\augx
+\J(\nomx)\nomx
-\f(\nomx)
+\lowerb,\\
\remup(\augx)
&:= (\upperW-\J(\nomx))\augx
+\J(\nomx)\nomx
-\f(\nomx)
+\upperb.
\end{aligned}
\end{equation}
Then, an interval enclosure of \(\rem(\cX;\nomx)\subseteq[\remlow,\remup]\) is given by
\begin{equation}
\remlow_i
=
\min_{\augx\in\cX} \; \remlow_i(\augx),
\quad
\remup_i
=
\max_{\augx\in\cX} \; \remup_i(\augx),
\quad \forall i \in [1,m],
\label{eq:crown_remainder_interval}
\end{equation}
\end{theorem}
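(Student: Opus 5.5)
The plan is to work directly from the definition \eqref{eq:linear_error_def} of the remainder and to use the CROWN bounds of Thm.~\ref{theo:crown} as the only nonlinear ingredient. Everything else is affine bookkeeping: the Jacobian correction is an affine function of $\augx$, so it can be added to the CROWN sandwich without loosening it.

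\textbf{Step 1 (pointwise sandwich).} Fix $\augx\in\cX$. Rewrite \eqref{eq:linear_error_def} as
\[
\rem(\augx;\nomx)=\f(\augx)-\J(\nomx)\augx+\J(\nomx)\nomx-\f(\nomx).
\]
The last three terms are affine in $\augx$ and do not depend on the NN relaxation. Theorem~\ref{theo:crown} gives, componentwise, $\lowerW\augx+\lowerb\le\f(\augx)\le\upperW\augx+\upperb$ for every $\augx\in\cX$. Adding the same vector $-\J(\nomx)\augx+\J(\nomx)\nomx-\f(\nomx)$ to all three sides preserves the componentwise inequalities. Collecting the terms in $\augx$ yields exactly $\remlow(\augx)\le\rem(\augx;\nomx)\le\remup(\augx)$ with the stated affine functions.

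\textbf{Step 2 (interval enclosure).} For each component $i$ and each $\augx\in\cX$, we have the chain
\[
\min_{\augx'\in\cX}\remlow_i(\augx')\le\remlow_i(\augx)\le\rem_i(\augx;\nomx)\le\remup_i(\augx)\le\max_{\augx'\in\cX}\remup_i(\augx').
\]
Hence $\rem(\cX;\nomx)\subseteq[\remlow,\remup]$ as defined in \eqref{eq:crown_remainder_interval}.

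The only point needing care is that these extrema exist and are finite. This holds because $\cX$ is a compact box and $\remlow_i,\remup_i$ are affine, hence continuous. In fact they admit closed forms. Write $\cX=[\underline a,\overline a]$ with center $c=(\underline a+\overline a)/2$ and radius $\rho=(\overline a-\underline a)/2$, and let $g_i^\top$ denote the $i$-th row of $\upperW-\J(\nomx)$. Then
\[
\max_{\augx\in\cX} g_i^\top\augx=g_i^\top c+|g_i|^\top\rho,
\]
and the minimum for $\remlow_i$ is obtained analogously with the corresponding row of $\lowerW-\J(\nomx)$ and a minus sign.

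I do not expect a genuine obstacle. The step most worth stating explicitly is the soundness of Step 1 for \emph{every} $\augx\in\cX$. The CROWN slopes need not equal $\J(\nomx)$, but the correction by $\J(\nomx)$ is exact and affine, so no additional relaxation is introduced; looseness arises only in the final box optimization.
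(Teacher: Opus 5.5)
Your proposal is correct and follows the paper's proof: substitute the CROWN bounds into the definition of $\rem(\augx;\nomx)$, absorb the affine Jacobian correction to get $\remlow(\augx)\le\rem(\augx;\nomx)\le\remup(\augx)$, then take componentwise extrema over the box $\cX$. Your closed-form center/radius expression for those box extrema is a correct addition; the paper does not spell it out and only remarks that the optimization is inexpensive because the bounds are affine.
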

\begin{proof}
Substituting the CROWN lower bound~\eqref{eq:crown_bounds} into the expression for \(\rem(\augx;\nomx)\)~\eqref{eq:linear_error_def}, we yield
$\rem(\augx;\nomx)
\ge
\lowerW\augx+\lowerb-\f(\nomx)-\J(\nomx)(\augx-\nomx)$, 
which simplifies to 
$\rem(\augx;\nomx)
\ge
(\lowerW-\J(\nomx))\augx
+\J(\nomx)\nomx
-\f(\nomx)
+\lowerb
=
\remlow(\augx).$ 
Similarly, we apply the CROWN upper bound.
Thus, \(\forall\augx\in\cX\): 
$\remlow(\augx)\le \rem(\augx;\nomx)\le \remup(\augx)$. 
For bounds valid over \(\cX\), we take the componentwise minimum of $\remlow(\augx)$ and maximum of $\remup(\augx)$ over \(\cX\), as in~\eqref{eq:crown_remainder_interval}. 
\end{proof}

\looseness-1Thm.~\ref{thm:crown_remainder} reconciles the CROWN slopes and the local linearization slope \(\J(\nomx)\), thereby producing a sound remainder bound around the nominal point.
Since \(\remlow(\augx)\) and \(\remup(\augx)\) are affine in \(\augx\), problem~\eqref{eq:crown_remainder_interval} is inexpensive when \(\cX\) is a box set.

\subsection{Zonotope Bounds of $r_k(x_k,u_k,w_k)$}

To obtain the zonotope bound for $r_k(x_k,u_k,w_k)$~\eqref{eq:lin_bound}, we compute the  enclosure $[\rlow_k^{\mathrm{lin}}, \rup_k^{\mathrm{lin}}]$ of the linearization error $r_k^{\mathrm{lin}}$ as in Sec.~\ref{sec:method_linearization_analytical} or~\ref{sec:method_linearization_neural} and the enclosure $[\rlow_k^{\mathrm{dist}}, \rup_k^{\mathrm{dist}}]$ of the disturbance error $r_k^{\mathrm{dist}}$ via interval arithmetic in~\eqref{eq:interval_r_dist}, where $[E](\Omega_k^x)$ is the interval enclosure of $E$ over $\Omega_k^x$:
\begin{equation}
    r_k^{\mathrm{dist}} = \bigl(E(x_k)-E_k\bigr) w_k \in \bigl([E](\Omega_k^x) \ominus E_k\bigr) \mathcal{E}_{\nx}.
    \label{eq:interval_r_dist}
\end{equation}

Finally, we obtain $r_k(x_k,u_k,w_k) \in [\rlow_k, \rup_k]$, where $\rlow_k = \rlow_k^{\mathrm{lin}} + \rlow_k^{\mathrm{dist}}$ and $\rup_k = \rup_k^{\mathrm{lin}} + \rup_k^{\mathrm{dist}}$. Its zonotope bound is given by $\big( \bar R_k \mathcal{E}_{\nx} \oplus \bar c_k \big)$, where $\bar c_k = \tfrac{1}{2}(\rlow_k + \rup_k)$ and $\bar R_k = \tfrac{1}{2}\mathrm{diag}(\rup_k - \rlow_k)$. We denote this mapping compactly as
\begin{equation}\label{eq:zono_map}
    \hspace{-1pt}\bar R_k \hspace{-1pt}=\hspace{-1pt} \Gamma_r(\Omega_k, z_k, v_k; f, E), \hspace{6pt} \bar c_k\hspace{-1pt}=\hspace{-1pt}\Gamma_c(\Omega_k, z_k, v_k; f, E).
\end{equation}
\subsection{Differentiable, Parallel Bound Implementation}
We implement our LEBs in JAX, with the PBH using \texttt{immrax} for interval arithmetic and \texttt{jax\_verify} providing CROWN-based bounds for NN dynamics. JAX offers differentiability and GPU parallelization, enabling \emph{gradient-based refinement} of the nominal point $\nomx$ and scalable \emph{input domain partitioning} to reduce conservativeness for SLS.

\section{Using Linearization Error Bounds in GPUSLS}\label{sec:method_sls}
We discuss the use of our LEBs in SLS (Sec. \ref{sec:method_nonlinear_sls}) and describe an efficient GPU-based implementation (Sec. \ref{sec:gpu_implementation}).

\subsection{Nonlinear SLS Formulation}\label{sec:method_nonlinear_sls}
\looseness-1Following \cite[App. A]{leeman2024fast}, SLS can be extended to nonlinear systems by planning a nominal trajectory for the disturbance-free dynamics $z_{k+1} = f(z_k,v_k)$ and modeling tracking error dynamics via the LTV system of \eqref{eq:dynamics_ltv}. 
To solve \eqref{eq:robust_nocp} approximately, we modify the nonlinear SLS formulation of \cite[App. A]{leeman2024fast} to be compatible with 1) non-zero-centered disturbance sets and 2) the LEBs given in Sec. \ref{sec:method_linearization_analytical}-\ref{sec:method_linearization_neural}:
\begin{subequations}\label{eq:fastsls_problem}
    \begin{align}
        \hspace{-13pt}\min_{\substack{
        \mathbf{z},\, \mathbf{v}, \mathbf{\Phi}
        }} \ \
        & J(\mathbf{z},\mathbf{v}) + \tilde{H}_0(\mathbf{\Phi}, \bm{\mathcal{D}}) \label{eq:sls_problem_obj} \\[-1mm]
        \text{s.t.}\quad \
        & z_{k+1} = f(z_k, v_k), \quad \forall k \in [N],\quad z_0 = \bar{x}_0,\label{eq:sls_problem_dyn} \\[-0.5mm] 
        & \Phix_{k+1, j} = A_k\Phix_{k, j} + B_k\Phiu_{k,j},\ \ \Phix_{j+1,j} = I_{\nx}, \\[-0.5mm]
        & \nonumber \quad\quad\forall j \in [N], \hspace{5pt} \forall k \in [j+1, N - 1], \\[-0.5mm]
        & \bar E_k := \mathcal{Z}_r(z_k, v_k, \bm{\mu}_k, \bm{\sigma}_k),\quad \forall k\in [N], \label{eq:rmpc_linearization_radius}\\
        & \bar{c}_k := \mathcal{Z}_c(z_k, v_k, \bm{\mu}_k, \bm{\sigma}_k),\quad \forall k\in [N], \label{eq:rmpc_linearization_center}\\
        & \bm{\sigma}_k = h^{\sigma}_{k}(\Phit, \bm{\mathcal{D}}),~ \bm{\mu}_k = h^{\mu}_{k}(\Phit, \bm{\mathcal{D}}), \,\, \forall k \in [N],\label{eq:rmpc_tube_width_center}\\[-0.5mm]
        & g_k(z_k, v_k) + h_{k}(\Phit, \bm{\mathcal{D}}) \leq 0, \qquad \forall k \in [N],\hspace{-5pt}\label{eq:rmpc_stage_tightening}\\[-0.5mm]
        & g^f(z_N) + h^f(\mathbf{\Phi}, \bm{\mathcal{D}}) \leq 0\label{eq:rmpc_term_tightening},
    \end{align}
\end{subequations}
where $A_k, B_k$ are the linearized dynamics \eqref{eq:linearizations} at time $k$ and functions $\mathcal{Z}_r (\cdot), \mathcal{Z}_c (\cdot)$ \eqref{eq:rmpc_linearization_radius}--\eqref{eq:rmpc_linearization_center} are the Minkowski sums of the exogenous disturbance and linearization error as a function of the nominal ($z_k$,$v_k$), reachable tube center offset $\bm{\mu}_k \in \mathbb{R}^{n_x + n_u}$, and widths $\bm{\sigma}_k\in \mathbb{R}^{n_x + n_u}$ respectively~\cite{zhan2026robustly, leeman2025robust}, and $\bm{\mathcal{D}} := \{(\bar{c}_j, \bar E_j)\}_{j=0}^{N-1}$. Formally, for all $k \in [N]$ in \eqref{eq:rmpc_linearization_radius}--\eqref{eq:rmpc_linearization_center}, $\bar c_k \in \mathbb{R}^{\nx}$ and $\bar E_k \in \mathbb{R}^{\nx \times 2\nx}$ are given as
\begin{subequations} \label{eq:zonotope_combination}
    \begin{align} 
    \hspace{-4pt}\mathcal{A}_k &= \left([z_k^\top, v_k^\top]^\top + \bm{\mu}_k\right) \oplus \textrm{diag}(\bm{\sigma}_k)\mathcal{E}_{\nx+\nuu},\label{eq:zontope_combination_interval}\\
    \hspace{-4pt}\bar{E}_k 
    & := [E(z_k)\ \ \bar R_k]=\underbrace{
    \begin{bmatrix}
    E(z_k) \ \ \Gamma_r\!\left(
        \mathcal{A}_k, z_k, v_k; f, E
    \right)
    \end{bmatrix}
    }_{\mathcal{Z}_r(z_k, v_k, \bm{\mu}_k, \bm{\sigma}_k)},\label{eq:zonotope_combination_concat}\\[-5mm]
    \hspace{-4pt}\bar{c}_k & := \underbrace{\Gamma_c\!\left(
        \mathcal{A}_k, z_k, v_k;
        f, E
    \right)}_{\mathcal{Z}_c(z_k, v_k, \bm{\mu}_k, \bm{\sigma}_k)},\label{eq:zonotope_combination_center}
    \end{align}
\end{subequations}
where $\bar{c}_k$ is the offset and $\bar{E}_k$ is the generator matrix of a zonotope for the Minkowski sum of the exogenous disturbance and linearization error: $E(z_k)\mathcal{E}_{n_x} \oplus (\bar{c}_k\oplus \bar R_k\mathcal{E}_{n_x}) \equiv \bar{c}_k \oplus \bar{E}_k\mathcal{E}_{2n_x}$. Using \eqref{eq:zonotope_combination_concat} and \eqref{eq:zonotope_combination_center}, the resulting reachable tube overapproximation is, 
\begin{equation}
\begin{aligned}\label{eq:sls_reach_non_center}
    \bar\Omega_k^x := z_k + \textstyle\sum_{j=0}^{k-1}\Phix_{k,j}\bar{c}_j + \textstyle\bigoplus_{j=0}^{k-1}\Phix_{k,j}\bar{E}_j \mathcal{E}_{2\nx},\\\bar\Omega_k^u := v_k+ \textstyle\sum_{j=0}^{k-1}\Phiu_{k,j}\bar{c}_j + \textstyle\bigoplus_{j=0}^{k-1}\Phiu_{k,j}\bar{E}_j \mathcal{E}_{2\nx}.
\end{aligned}
\end{equation}

Robust constraint satisfaction is enforced through constraint tightenings $h_k(\mathbf{\Phi}, \bm{\mathcal{D}})$ and $h^f(\mathbf{\Phi}, \bm{\mathcal{D}})$. In short, these tightenings capture the propagation of unit-normalized disturbances in \eqref{eq:sls_reach_non_center}, which are then scaled by the disturbance bounds $\bar{E}_j$ and shifted by offset $\bar c_j$. For simplicity, we derive the tightenings assuming linear constraints, i.e., $g_k(z_k, v_k) = G_k\begin{bmatrix}z_k,\ v_k\end{bmatrix}^\top$ and $g^f(z_N) = G^fz_N$, for $G_k \in \mathbb{R}^{n_c \times (n_x+n_u)}, G_{N}\in \mathbb{R}^{n_c \times n_x}$. Valid tightenings can be obtained for nonlinear $g_k$ and $g^f$ by including additional error terms \cite[Eq. 22]{zhan2026robustly}. Formally, we define the tube center \eqref{eq:tight_tube_center} and width \eqref{eq:tight_tube_width}, stacked (terminal) constraint tightenings \eqref{eq:tight_constraint}-\eqref{eq:tight_terminal_constraint}, and surrogate system-level response cost \eqref{eq:surrogate_cost_h0} as, 
\begin{align}
    h_k^{\mu}(\mathbf{\Phi}, \bm{\mathcal{D}}) &=
    \textstyle\sum_{j=0}^{k-1}\Phit_{k,j}
    \bar{c}_j\label{eq:tight_tube_center}\\
    h_k^{\sigma}(\mathbf{\Phi},\bm{\mathcal{D}}) &=
    \textstyle\sum_{j=0}^{k-1}
    \big\Vert \Phit_{k,j}\bar{E}_j\big\Vert_{1,\mathrm{r}}\label{eq:tight_tube_width}\\
    h_k(\mathbf{\Phi}, \bm{\mathcal{D}})
    &=
    \textstyle\sum_{j=0}^{k-1}(G_k\Phit_{k,j}\bar{c}_j + 
    \big\Vert G_k\Phit_{k,j}\bar{E}_j \big\Vert_{1,\mathrm{r}})\label{eq:tight_constraint}\\
    h^f(\mathbf{\Phi}, \bm{\mathcal{D}})
    &= \textstyle\sum_{j=0}^{N-1}
    (G_{N}\mathbf{\Phi}^{\mathrm{x}}_{N,j}\bar{c}_j + 
    \big\Vert
    G_{N}\mathbf{\Phi}^{\mathrm{x}}_{N,j} \bar{E}_j\big\Vert_{1,\mathrm{r}})\hspace{-2pt}\label{eq:tight_terminal_constraint}\\
    \tilde{H}_0(\Phit, \bm{\mathcal{D}})&=\textstyle\sum_{j=0}^{N-1}\!(\textstyle\sum_{k=j}^{N-1}\!(\|\tilde Q^{1/2}\Phix_{k,j}\bar{E}_j\|_{\mathcal F}^2\label{eq:surrogate_cost_h0}\\&\hspace{4mm}+\|\tilde R^{1/2}\Phiu_{k,j}\bar{E}_j\|_{\mathcal F}^2)\nonumber+\|\tilde Q_N^{1/2}\Phix_{N,j}\bar{E}_j\|_{\mathcal F}^2)\,,
\end{align}
\looseness-1where $\small{\Phit_{k,j} := \begin{bmatrix}
    {\Phix_{k,j}}^{\top} & \hspace{-2mm}{\Phiu_{k,j}}^{\top}
\end{bmatrix}^{\top}}$, $\Phit$ collects $\Phit_{k,j}$ for all $k,j\in[N]$, and $\tilde{Q}, \tilde{Q}_N  \in \mathbb{S}_{++}^{n_x}, \tilde{R} \in \mathbb{S}_{++}^{n_u}$. The constraint tightenings and tube widths in \eqref{eq:tight_tube_width}-\eqref{eq:tight_terminal_constraint} are derived for the worst case disturbances using the dual norm property \cite{li2026certified}.
The cost $\tilde{H}_0(\Phit,\bm{\mathcal{D}})$ \eqref{eq:surrogate_cost_h0} penalizes uncertainty (the tube widths ~\eqref{eq:tight_tube_width}). 
Following \cite{leeman2024fast, fang2026safe}, we solve \eqref{eq:fastsls_problem} via an iterative process that alternates between finding a (A) nominal trajectory and a (B) robust controller. To solve (A), we find $(\mathbf{z}, \mathbf{v})$ by solving a constraint-tightened NOCP \eqref{eq:nmpc_problem_tightened},
\begin{subequations} \label{eq:nmpc_problem_tightened}
    \begin{align}
        \hspace{-5pt}\min_{\substack{
        \mathbf{z},\, \mathbf{v}
        }} \quad
        & J(\mathbf{z},\mathbf{v}) \label{eq:nmpc_problem_tightened_cost}\\
        \text{s.t.} \quad 
        & z_{k+1} = f(z_k, v_k), \quad  \forall k \in [N], \qquad z_0 = \bar{x}_0,\label{eq:nmpc_problem_tightened_dyn}\\
        & g_k(z_k, v_k) + h_k(\mathbf{\Phi}, \bm{\mathcal{D}})\leq 0,\quad  \forall k \in [N], \\
        & g^f(z_N) + h^{f}(\mathbf{\Phi}, \bm{\mathcal{D}})\leq 0.
    \end{align}
\end{subequations}
Then, \eqref{eq:nmpc_problem_tightened} mirrors the structure of \eqref{eq:robust_nocp} and can be solved using the GPU-parallel method of \cite[Sec. IV]{fang2026safe}. 
After solving (A) \eqref{eq:nmpc_problem_tightened}, we linearize around the resulting $(\mathbf{z}, \mathbf{v})$ and solve for a robust controller (B) by finding a $\mathbf{\Phi}$ that optimizes \eqref{eq:surrogate_cost_h0}: 
\begin{subequations} \label{eq:fastsls_qp}
    \begin{align}
        \hspace{-3pt}\min_{\Phix, \Phiu} \ 
        &\sum_{j=0}^{N-1}
        \sum_{k=j}^{N-1} \| \mathcal{Q}_{k,j} \Phit_{k,j}\bar{E}_j \|^2_{\mathcal{F}} + \|\mathcal{Q}_{N,j} 
         \Phix_{N,j} \bar{E}_j\|^2_{\mathcal{F}}\label{eq:fastsls_qp_cost} \\
        \text{s.t.}\ \ & \Phix_{k+1,j} = A_k \Phix_{k,j} + B_k \Phiu_{k,j}, \label{eq:fastsls_qp_x_prop}\\
        & \Phix_{j+1,j} = I_{n_x},~\forall j \in [N],~ \forall k \in [j+1,N-1],\hspace{-4pt}
        \label{eq:fastsls_qp_x_init}
    \end{align}
\end{subequations}
where $\mathcal{Q}_{k,j}, \mathcal{Q}_{N,j}$ are defined according to \cite[Eq. 20, 23]{leeman2024fast} (see App. \ref{app:sls_consistency}) for consistency between \eqref{eq:nmpc_problem_tightened} and \eqref{eq:fastsls_qp}. Notably, the right-multiplication by $\bar E_j$ prevents the cost in \eqref{eq:fastsls_qp_cost} from being expressed as a standard LQR objective in $\Phit^x$ and $\Phit^u$. Thus, it is not directly solvable using the Riccati-based methods of \cite{leeman2024fast, fang2026safe}. Instead, we solve \eqref{eq:fastsls_qp_reform}, which has an LQR cost and can be solved via the efficient solvers \cite{leeman2024fast, fang2026safe}:
\begin{subequations} \label{eq:fastsls_qp_reform}
    \begin{align}
        \hspace{-8pt}\min_{\Phix, \Phiu} \ 
        &\sum_{j=0}^{N-1}\hspace{-3pt}
        \sum_{k=j}^{N-1} \| \mathcal{Q}_{k,j} \Phit_{k,j}\|^2_{\mathcal{F}} + \|\mathcal{Q}_{N,j} 
         \Phix_{N,j}\|^2_{\mathcal{F}}\label{eq:fastsls_qp_cost_reform} \\
        \text{s.t.}\ \ & \Phix_{k+1,j} = A_k \Phix_{k,j} + B_k \Phiu_{k,j}, \label{eq:fastsls_qp_err_dyn}\\
        & \Phix_{j+1,j} = I_{\nx},~ \forall j \in [N],~\forall k \in [j+1,N-1].
        \label{eq:fastsls_qp_cost_reform_init}
    \end{align}
\end{subequations}
Note that \eqref{eq:fastsls_qp_reform} coincides with \eqref{eq:fastsls_qp} if $\bar E_j = I$ for all $j \in [N]$. In Prop. \ref{prop:sls}, we prove that the set of minimizers of \eqref{eq:fastsls_qp} and \eqref{eq:fastsls_qp_reform} are equivalent if $\bar E_j$ is right-invertible for all $j \in [N]$ (see App. \ref{app:sls} for the proof).

\begin{proposition}\label{prop:sls}
Let $(\TPhix,\TPhiu)$ be an optimizer of \eqref{eq:fastsls_qp}, and let $(\HPhix,\HPhiu)$ be an optimizer of \eqref{eq:fastsls_qp_reform}. If $\bar E_j$ is right-invertible for all $j \in [N]$, i.e., there exists $\bar E_j^{\ddagger}$ such that $\bar E_j \bar E_j^{\ddagger} = I_{\nx}$, then the two optimizers coincide: $\TPhix = \HPhix, \TPhiu = \HPhiu$.
\end{proposition}

When implementing the method of Sec. \ref{sec:method}, we add a small positive diagonal padding to $\bar R_k$ so that it is always invertible, and thus $\bar E_j$ \eqref{eq:zonotope_combination} is always right-invertible for all $j \in [N]$. 
After solving \eqref{eq:fastsls_qp_reform}, we compute the constraint tightening terms in \eqref{eq:tight_tube_center}--\eqref{eq:tight_terminal_constraint}. This updates the reachable tube centers $\bm{\mu}_k$ and widths $\bm{\sigma}_k$: 
$\bm{\sigma}_k \gets h_k^{\sigma}(\mathbf{\Phi}, \bm{\mathcal{D}})~\text{and}~\bm{\mu}_k \gets h_k^{\mu}(\mathbf{\Phi},\bm{\mathcal{D}}),
$
\looseness-1which are used as constraint tightenings in the next iteration of nominal trajectory optimization \eqref{eq:nmpc_problem_tightened}. This iterates until convergence (as in \cite[Alg. 1]{leeman2024fast}); we terminate after a fixed number of iterations or if the change in the iterates is below a threshold (see App. \ref{app:algorithm_block} for an algorithm block).

\noindent\textbf{Formulation Novelty:} Here, we summarize the changes made to prior nonlinear SLS formulations \cite[App. A]{leeman2024fast} \cite{leeman2025robust}:
\begin{enumerate}[leftmargin=*, itemindent=0pt]
    \item \looseness-1Zonotopes enable exact tightening computation \eqref{eq:tight_tube_center}--\eqref{eq:tight_terminal_constraint} by concatenating the generators of the exogenous disturbance and linearization error, unlike prior ellipsoidal methods \cite[App. A]{leeman2024fast}, but this yields right-invertible $\bar E_j$ incompatible with existing solvers \cite{leeman2024fast, fang2026safe}. We address this by setting $\Phix_{j+1,j} = I_{n_x}$ and including the disturbance scaling directly into the tightenings and cost \eqref{eq:surrogate_cost_h0}, with a compatible reformulation \eqref{eq:fastsls_qp_reform} justified by Prop. \ref{prop:sls}.
    \item Finally, we extend \cite{leeman2025robust, leeman2024fast} to handle non-zero-centered disturbances, enabling direct use of interval bounds from \eqref{eq:zono_map} without extra conservativeness due to zero-centering.
\end{enumerate}

\subsection{Efficient SLS Implementation on the GPU} \label{sec:gpu_implementation}
We propose GPUSLS-Linearization Error Optimization (GPUSLS-LEO), which builds on GPUSLS \cite{fang2026safe} to explicitly account for linearization error propagation. 
We first replace the ellipsoidal disturbance representation of \cite{fang2026safe} with a zonotope, modifying the constraint tightenings and cost terms via \eqref{eq:tight_constraint}-\eqref{eq:surrogate_cost_h0}. At each controller update, we modify GPUSLS by including a zonotopic bound on the linearization error via \eqref{eq:zonotope_combination}. We then modify the initialization of $\mathbf{\Phi}^\text{x}$ via \eqref{eq:fastsls_qp_cost_reform_init}. 

All other steps closely follow the GPUSLS procedure. In particular, GPUSLS-LEO solves \eqref{eq:fastsls_problem} iteratively via sequential quadratic programming (SQP). At each iteration, GPUSLS-LEO quadraticizes the cost and linearizes $x_{k+1} = f(x_k, u_k)$ around the current $(\textbf{z}, \textbf{v})$ to obtain $A_k, B_k$ \eqref{eq:dynamics_ltv}. It then solves (A) a local QP approximation of \eqref{eq:nmpc_problem_tightened} to update $(\textbf{z}, \textbf{v})$, (B) \eqref{eq:fastsls_qp_reform} to get $\Phit$, and updates (C) the tightenings via \eqref{eq:rmpc_tube_width_center}. Since our LEBs are differentiable, we can penalize error accumulation during step (A) by modifying the nominal objective \eqref{eq:nmpc_problem_tightened_cost} to $\bar J(\textbf{z}, \textbf{v}):=J(\textbf{z}, \textbf{v})+\lambda J_\textrm{err}(\textbf{z}, \textbf{v})$, where
\begin{equation}\label{eq:linearization_error_cost_penalty}
    \hspace{-2pt}J_\textrm{err}(\textbf{z}, \textbf{v}) = \sum_{j=0} ^ {N - 1}\sum_{k=j}^{N - 1} \| \mathbf{\Phi}_{k,j} \mathcal{Z}_r(z_j, v_j, \bm{\mu}_j, \bm{\sigma}_j)\|_\mathcal{F}^2,
\end{equation} 
 penalizes the propagated residual error \eqref{eq:linearization_error_all}. Here, $\mathbf{\Phi}, \bm{\mu}_j, \bm{\sigma}_j$ are from the previous iteration and $\lambda$ is a weighting term. Notably, \eqref{eq:linearization_error_cost_penalty} biases $(\textbf{z}, \textbf{v})$ toward regions of low propagated exogenous and linearization error disturbances.
Finally, we show that by solving \eqref{eq:fastsls_problem}, GPUSLS-LEO ensures containment in the reachable tubes \eqref{eq:sls_reach_non_center}, i.e., $x_k \in \bar\Omega_k^x$ and $u_k \in \bar \Omega_k^u$, for all $k \in [N]$. Formally, we have (proof in App. \ref{app:proofs}):
\begin{theorem}
Let $\mathbf{z}$, $\mathbf{v}$, and $\Phit$ be a feasible solution for \eqref{eq:fastsls_problem}. Then, \eqref{eq:sls_reach_non_center} is guaranteed to overapproximate the true closed-loop reachable set \eqref{eq:reach}, i.e., $\Omega_k \subseteq \bar\Omega_k$ for all $k \in [N]$.
\end{theorem}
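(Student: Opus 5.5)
The argument is an induction on $k$. The claim carried along is that every true closed-loop trajectory admits a representation of the SLS form \eqref{eq:SLS_state}, driven by \emph{surrogate} disturbances that lie in the zonotopes $\bar c_j \oplus \bar E_j\mathcal{E}_{2\nx}$.

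Fix any disturbance sequence $w_0,\dots,w_{N-1}\in\mathcal{E}_{\nx}$ and let $(x_k,u_k)$ be the resulting closed-loop trajectory of \eqref{eq:nmpc_dynamics}. The controller is taken to be the disturbance-feedback policy \eqref{eq:disturbance_feedback} built from $\Phiu$, with $E_jw_j$ replaced by the realized total residual $d_j:=x_{j+1}-f(z_j,v_j)-A_j\delta z_j-B_j\delta v_j$. This residual is causally computable from $x_0,\dots,x_{j+1}$, so the policy is admissible.

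The inductive hypothesis at step $k$ has three parts:
\begin{enumerate}[leftmargin=*, itemindent=0pt]
\item $x_k=z_k+\sum_{j<k}\Phix_{k,j}d_j$ and $u_k=v_k+\sum_{j<k}\Phiu_{k,j}d_j$;
\item $d_j\in\bar c_j\oplus\bar E_j\mathcal{E}_{2\nx}$ for all $j<k$;
\item hence $(x_k,u_k)\in\bar\Omega_k$ by \eqref{eq:sls_reach_non_center}.
\end{enumerate}

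\textbf{Base case.} At $k=0$ we have $x_0=z_0=\bar x_0$ and $u_0=v_0$, so all three parts hold trivially.

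\textbf{Inductive step.}
\begin{enumerate}[leftmargin=*, itemindent=0pt]
\item \emph{Representation.} By definition of $d_k$ and the decomposition \eqref{eq:linearization_error}, $\delta z_{k+1}=A_k\delta z_k+B_k\delta v_k+d_k$ with $d_k=E_kw_k+r_k^{\mathrm{lin}}+r_k^{\mathrm{dist}}$. Substituting part 1 and using $\Phix_{k+1,j}=A_k\Phix_{k,j}+B_k\Phiu_{k,j}$ and $\Phix_{k+1,k}=I$ recovers part 1 at $k+1$. This is the same algebra as for \eqref{eq:SLS_state}, with $E_jw_j$ replaced by $d_j$.
\item \emph{Containment of $\mathcal{A}_k$.} We show $(x_k,u_k)\in\mathcal{A}_k$ from \eqref{eq:zontope_combination_interval}. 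By part 3 of the hypothesis, $(x_k,u_k)\in\bar\Omega_k$. Feasibility gives $\bm{\mu}_k=h_k^\mu=\sum_j\Phit_{k,j}\bar c_j$ and $\bm{\sigma}_k=h_k^\sigma=\sum_j\|\Phit_{k,j}\bar E_j\|_{1,\mathrm r}$. The interval hull of the zonotope $\sum_j\Phit_{k,j}\bar c_j\oplus\bigoplus_j\Phit_{k,j}\bar E_j\mathcal{E}_{2\nx}$ is exactly the box with center $\bm{\mu}_k$ and radii $\bm{\sigma}_k$, by the row-wise $\ell_1$ support function. Therefore $\bar\Omega_k\subseteq\mathcal{A}_k$.
\item \emph{Residual bound.} Apply Thm.~\ref{thm:path_hessian} or Thm.~\ref{thm:crown_remainder} on the box $\mathcal{A}_k$ with nominal point $(z_k,v_k)\in\mathcal{A}_k$. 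This gives $r_k^{\mathrm{lin}}\in[\rlow_k^{\mathrm{lin}},\rup_k^{\mathrm{lin}}]$. Applying \eqref{eq:interval_r_dist} with the enclosure $[E]$ of $E$ over the state projection of $\mathcal{A}_k$ gives the bound on $r_k^{\mathrm{dist}}$. Together, $r_k\in\bar c_k\oplus\bar R_k\mathcal{E}_{\nx}$, and adding $E(z_k)w_k$ yields $d_k\in\bar c_k\oplus[E(z_k)\ \bar R_k]\mathcal{E}_{2\nx}$, which is part 2.
\item \emph{Tube membership.} Plug part 2 into part 1. Since the $d_j$ enter linearly through $\Phix_{k+1,j}$ and $\Phiu_{k+1,j}$, we get $x_{k+1}\in\bar\Omega^x_{k+1}$ and $u_{k+1}\in\bar\Omega^u_{k+1}$. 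This is part 3.
\end{enumerate}

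Since $w$ was arbitrary, $\Omega_k\subseteq\bar\Omega_k$ for all $k$.

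The main obstacle is the apparent circularity: the error bound at step $k$ is computed on $\mathcal{A}_k$, which is defined from the tube, which in turn depends on the error bounds. Induction on $k$ resolves this. $\mathcal{A}_k$ depends only on $\bar c_j,\bar E_j$ for $j<k$, through $\Phit_{k,j}$ with $j<k$, and feasibility of \eqref{eq:rmpc_tube_width_center} ties $\bm{\mu}_k,\bm{\sigma}_k$ to exactly these earlier quantities. So the only care needed is to verify the interval-hull identity for zonotopes and to note that feasibility enforces equality, not merely approximation, in \eqref{eq:rmpc_tube_width_center}.
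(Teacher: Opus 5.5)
Your proof is correct and follows the same route as the paper's: an induction on $k$ in which validity of the residual zonotopes at steps $j<k$ gives $(x_k,u_k)\in\bar\Omega_k\subseteq\mathcal{A}_k$, after which the interval linearization error bound computed on $\mathcal{A}_k$ certifies $d_k\in\bar c_k\oplus\bar E_k\mathcal{E}_{2\nx}$. You are more explicit than the paper about the causal residual-feedback policy, the interval-hull identity behind $\bar\Omega_k\subseteq\mathcal{A}_k$, and why the dependence between $\mathcal{A}_k$ and $(\bar c_k,\bar E_k)$ is triangular. You should, however, justify $(z_k,v_k)\in\mathcal{A}_k$, which is a hypothesis of Thm.~\ref{thm:path_hessian} and Thm.~\ref{thm:crown_remainder}; it follows from your own induction applied to $w\equiv 0$, whose closed-loop trajectory is exactly the nominal one by \eqref{eq:sls_problem_dyn}.
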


\section{Results}

\looseness-1We evaluate LEB tightness (Sec.~\ref{sec:linearization_error_results}) and the robust controllers given by GPUSLS-LEO (Sec.~\ref{sec:gpusls_leo_results}). Runtime tests use an NVIDIA RTX 4090, except the multi-quadrotor and long-horizon experiments, which require an NVIDIA H200. 
System definitions appear in App.~\ref{app:dynamic_models}.
\subsection{Linearization Error} \label{sec:linearization_error_results}
\looseness-1We compare the LEB from our path-based Hessian (\textbf{PBH}) method on the \textit{Satellite} ($\nx=7$, $\nuu=3$) and \textit{Quadrotor} ($\nx=12$, $\nuu=4$) systems (App. \ref{app:dynamic_models}) with several baselines. \textbf{Global sampling}~\cite{leeman2025robust} estimates worst-case curvature \textit{offline} (taking over 7 minutes \cite{leeman2025robust}) over the \emph{global} set $\mathcal{X}\times\mathcal{U}$, yielding an \textit{empirical} over-approximation. \textbf{Interval Hessian (IH) (classic)}~\cite{leister2025robust, althoff2008reachability} evaluates~\eqref{eq:interval_quadratic_form} on a \textit{local} set $\mathcal{T}\subset(\mathcal{X}\times\mathcal{U})$. \textbf{IH (classic)-CORA} denotes the standard IH implementation in CORA~\cite{leister2025robust,Althoff2015ARCH}. \textbf{Random sampling} uses 10K samples from the local set and gives an \emph{under-approximation}.
We sample nominal points $(z,v)$ from $\mathcal{X}\times\mathcal{U}$ and define local sets $\mathcal{T}:=(z,v)\oplus \epsilon\mathcal{E}_{\nx+\nuu}$ with $\epsilon=0.01,0.05,0.1,0.15,0.2$. Fig.~\ref{fig:Linearization_error} shows the mean interval width of \eqref{eq:zono_map}. For \emph{Satellite}, where nonlinearity is mild, all over-approximation methods remain close to the random-sampling lower bound as $\epsilon$ grows, though PBH is still the tightest. For \emph{Quadrotor}, PBH consistently gives the tightest bounds. By contrast, global sampling is loose at small $\epsilon$, suggesting global curvature misses local behavior, while the interval Hessian becomes overly conservative at $\epsilon=0.20$. Table~\ref{tab:runtime_comparison} compares \textit{online} runtime. PBH runs in under 1 ms, enabling real-time use. Although slower than the global bound, which is efficiently calculated online after slow offline computation, PBH is 60\% and 27\% (satellite and quadrotor respectively) faster than dense random sampling and negligible running time compared to CORA IH.

\begin{figure}[h]
    \centering
    \includegraphics[width=\linewidth]{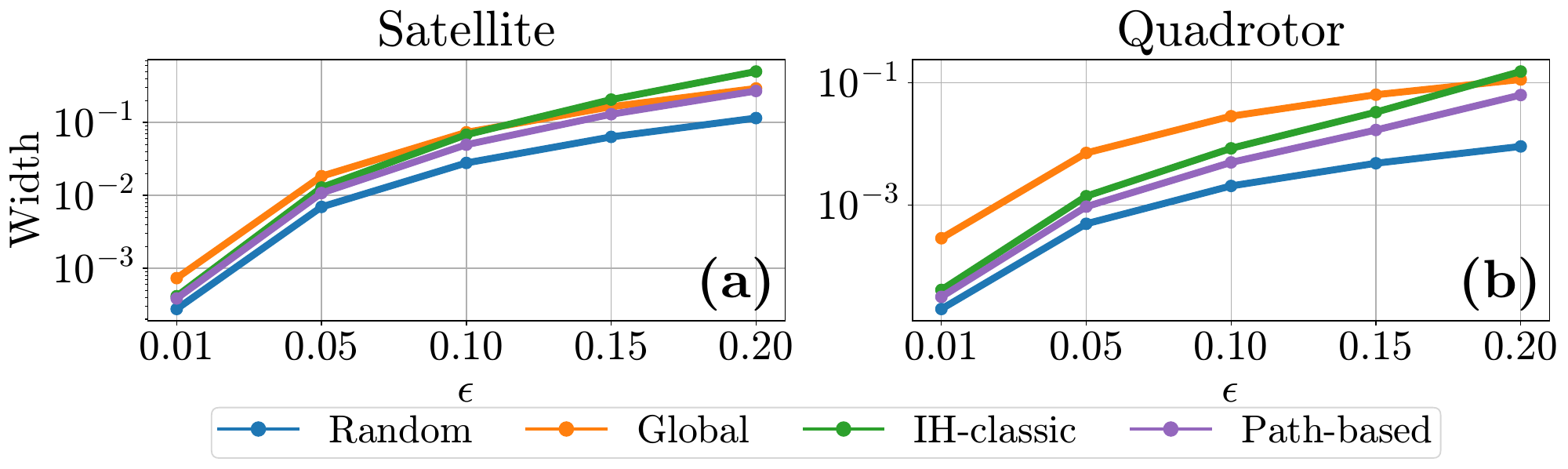}
    \caption{\textbf{(a)}: LEBs on the satellite system. \textbf{(b)}: LEBs on the quadrotor system. In both systems, our method achieves the tightest over-approximation. 
    }
    \label{fig:Linearization_error}
\end{figure}

\begin{table}[h]
\centering
\scriptsize
\caption{Runtime (in ms) comparison of linearization bound methods.}
\label{tab:runtime_comparison}
\setlength{\tabcolsep}{4pt}
\begin{tabular}{lccccc}
\toprule
System & Random & Global & IH-classic & IH-classic-CORA & Path-based \\
\midrule
Satellite & 0.894 & 0.143 & 0.218 & 11570 & 0.361 \\
Quadrotor & 1.278 & 0.156 & 0.453 & 227200 & 0.936 \\
\bottomrule
\end{tabular}
\end{table}

\begin{table}[h]
\centering
\scriptsize
\caption{Linearization error under nominal optimization via PGD.}
\label{tab:nominal_opt}
\setlength{\tabcolsep}{4pt}
\begin{tabular}{lcccc}
\toprule
System & Initial & Sampling & PGD & PGD better (\%) \\
\midrule
Satellite & 0.2680 & 0.2365 & 0.2241 & 100 \\
Quadrotor & 0.02253 & 0.01639& 0.01590  & 100 \\
\bottomrule
\end{tabular}
\end{table}
To quantify the benefit of differentiability in our path-based Hessian implementation, we search for a nominal point within a fixed perturbation radius $\epsilon$ that minimizes the LEB size. Starting from random nominal points, we run projected gradient descent (PGD) within a local neighborhood of radius $\epsilon_{\text{opt}}$ to reduce the error interval width. We compare PGD with random sampling of nominal points under the same time budget, where both methods use the PBH to evaluate linearization error. Table~\ref{tab:nominal_opt} shows that across 100\% of test cases, PGD yields smaller error interval widths than  random sampling across both systems, showing the value of gradient-based optimization for finding nominal points with tighter LEBs, which can benefit downstream SLS. 

\subsection{Using Linearization Bounds for RNOCP} \label{sec:gpusls_leo_results}

\looseness-1In this section, we evaluate \textbf{GPUSLS-LEO} on a suite of analytic and neural dynamical systems against various RNOCP baselines. We further assess the impact of key ablations, including the effect of optimizing over the linearization error gradients and the use of nonzero-centered zonotopes. All benchmarked methods are evaluated under the same disturbance set for the given dynamical system and, where applicable, are run for a maximum of 100 SQP iterations.

\begin{figure}[t]
    \centering
    \includegraphics[width=\linewidth]{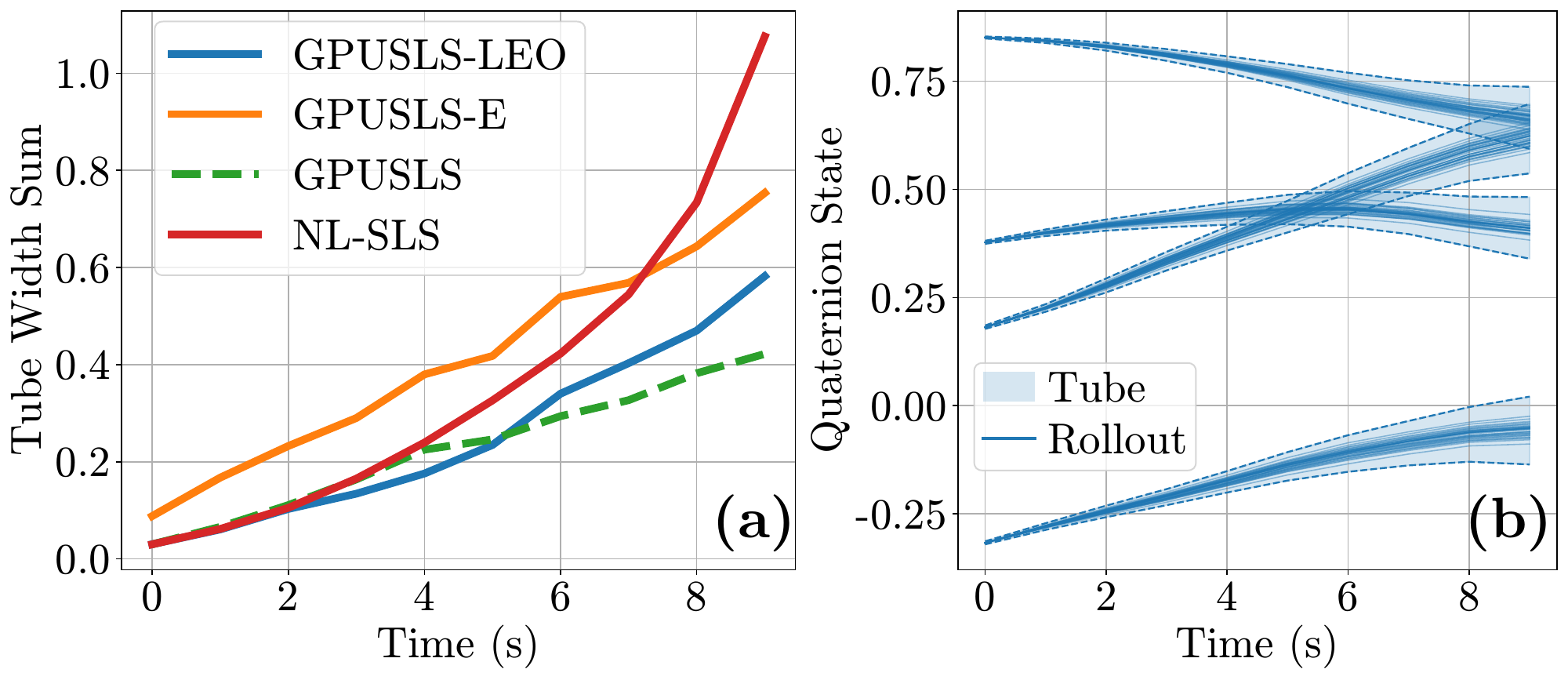}
    \caption{\looseness-1\textbf{Satellite (7D)}. \textbf{(a)}: Tube sizes for our method, GPUSLS-LEO, compared to baselines and the variant without linearization error. Our method is the least conservative, while minimally increasing tube size \textbf{(b)}: Tubes for the first four state dimensions; our tubes capture 100\% of sampled rollouts.
    }
    \label{fig:sat_tubes}
\end{figure}

\subsubsection{Satellite} \looseness-1We evaluate our method on a 7D satellite attitude control task, where the objective is to stabilize the system to a desired orientation under bounded disturbances. We adopt the experimental setup of \cite{leeman2025robust} and benchmark against their approach. In Fig.~\ref{fig:sat_tubes}\,\textbf{(a)} we compare the tube widths $\bm{\sigma}_k$ for: \textbf{GPUSLS-LEO}, \textbf{GPUSLS-E} (ellipsoidal propagation as in \cite[App. A]{leeman2024fast}), \textbf{GPUSLS} (no linearization error propagation) \cite{fang2026safe}, and \textbf{NL-SLS} (\hspace{1pt}\cite{leeman2025robust}). Our method achieves the tightest tubes among all baselines. We have on average 20\% smaller tubes than NL-SLS, which relies on a global maximum Hessian bound and is therefore locally overly-conservative along a trajectory. Compared to GPUSLS-E, our method achieves 45\% smaller tubes, as it involves overapproximating the Minkowski sum of ellipsoids, whereas our zonotope-based combination is exact. We observe that incorporating our LEB pipeline does not introduce significant conservativeness, as the tube widths increase by only 4\% compared to GPUSLS. In Fig.~\ref{fig:sat_tubes}\,\textbf{(b)}, we show the rollouts of 20 random and 64 adversarial disturbances for the quaternion states, showing 100\% containment under our method. 

\begin{figure}[t]
    \centering
    \includegraphics[width=\linewidth]{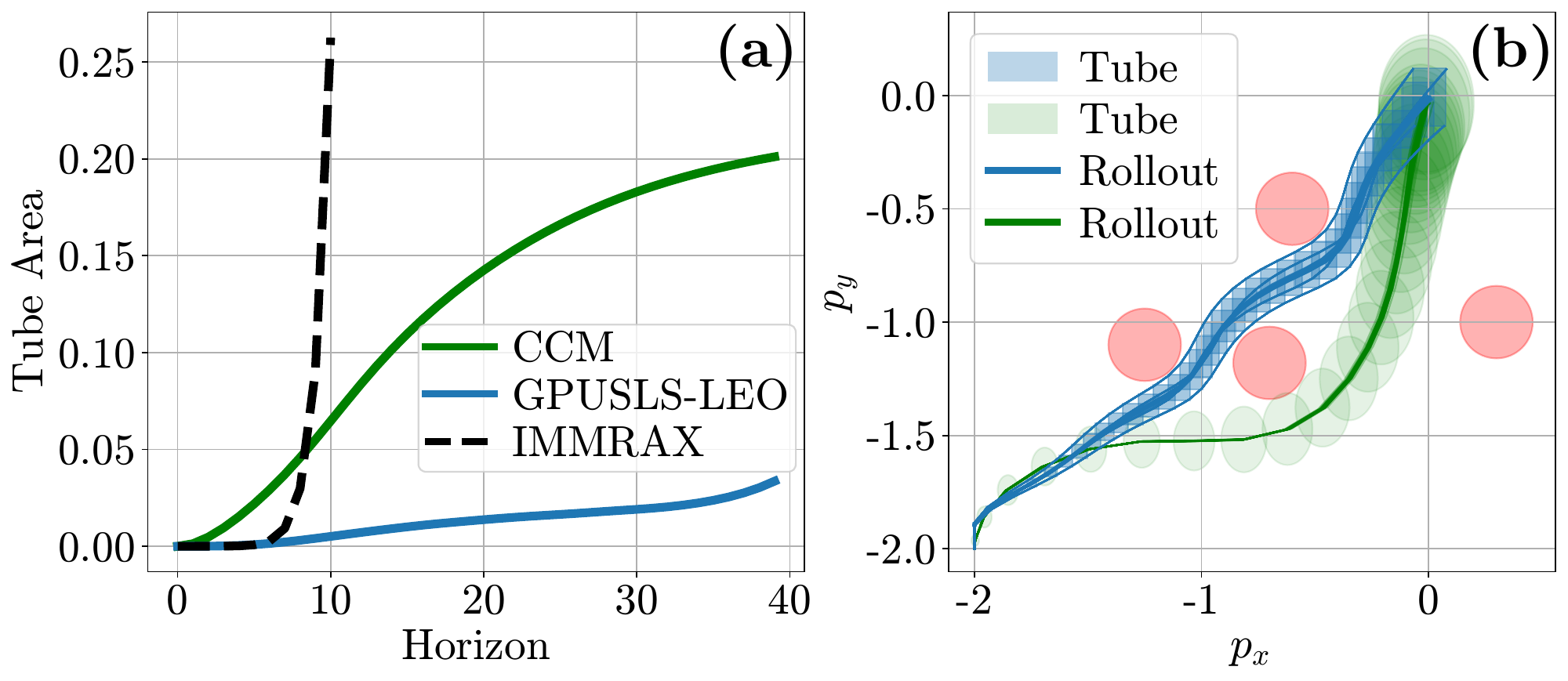} \vspace{-10pt}
    \caption{\looseness-1\textbf{Planar quadrotor (6D)}. \textbf{(a)}: Tube area comparison against CCM and \texttt{immrax}, showing our method is less conservative. \textbf{(b)}: Tubes and rollouts for GPUSLS-LEO and CCM. Both methods achieve 100\% safety, but the conservativeness of the CCM forces the system to take a suboptimal path.
    }
    \label{fig:planar_tubes}
\end{figure}

\begin{figure}[t]
    \centering
    \includegraphics[width=\linewidth]{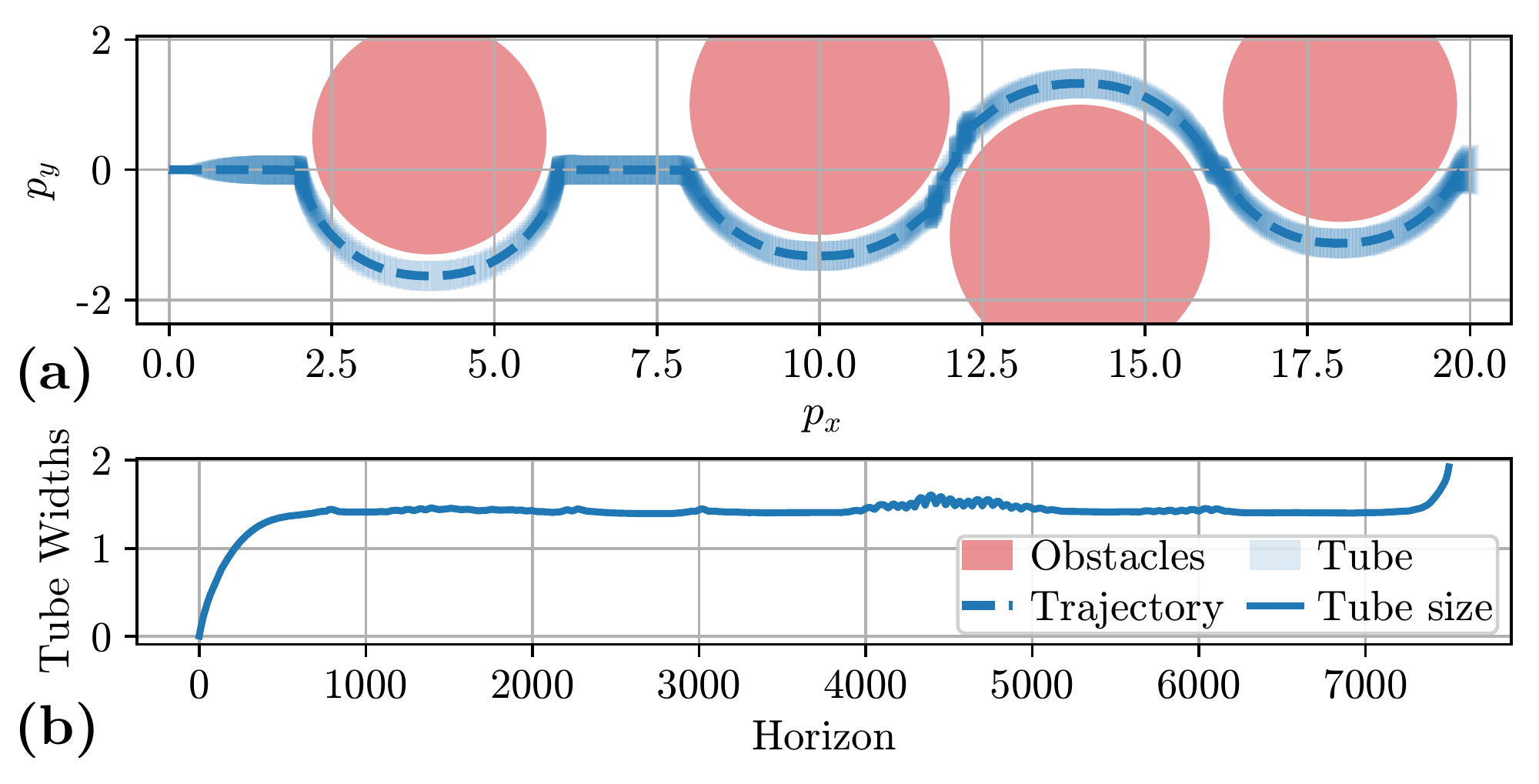}
    \caption{\looseness-1\textbf{Planar quadrotor (6D).} \textbf{(a):} System navigating through an obstacle field over a 20 m trajectory with a planning horizon length of 7500. \textbf{(b):} Tube widths along the horizon. Despite the long horizon, the tubes remain bounded due to the GPUSLS-LEO controller's optimization of tube sizes.} 
    \label{fig:long_tubes}
\end{figure} 

\subsubsection{Planar quadrotor} 

\looseness-1We evaluate our method on a 6D planar quadrotor against control contraction metrics (CCM)~\cite{sasfi2023robust}. Both methods must navigate through a dense obstacle field over a horizon of 40 steps. Fig.~\ref{fig:planar_tubes}\textbf{(a)} shows our method achieving 91\% smaller tube area than the CCM for the $(x,y)$ position states. This arises because the CCM certifies safety over the entire state space, introducing inherent conservativeness. GPUSLS-LEO, on the other hand, certifies safety along a specific trajectory, enabling tighter, trajectory-dependent bounds. We also evaluate the closed-loop system on an interval-based reachability analysis tool \texttt{immrax} \cite{harapanahalli2024immrax}, showing its calculated tube area diverging after a few steps.

\looseness-1In Fig.~\ref{fig:planar_tubes}\textbf{(b)} we compare the calculated trajectories and tubes for our method and CCM. Our approach is less conservative, allowing a more direct, task-optimal path, while the CCM is forced to take a longer route. In Fig.~\ref{fig:long_tubes}\textbf{(a)}, we show the trajectory and tubes over a long horizon of 7500 steps, which involves optimizing a state-control trajectory with $1.95\times 10^5$ decision variables. Due to the tightness of our LEBs and the optimization of the closed-loop controller in SLS, our tubes remain well-bounded and finite over the horizon (Fig.~\ref{fig:long_tubes}\textbf{(b)}), demonstrating its scalability to long horizons.

\begin{figure}[t]
    \centering
    \includegraphics[width=\linewidth]{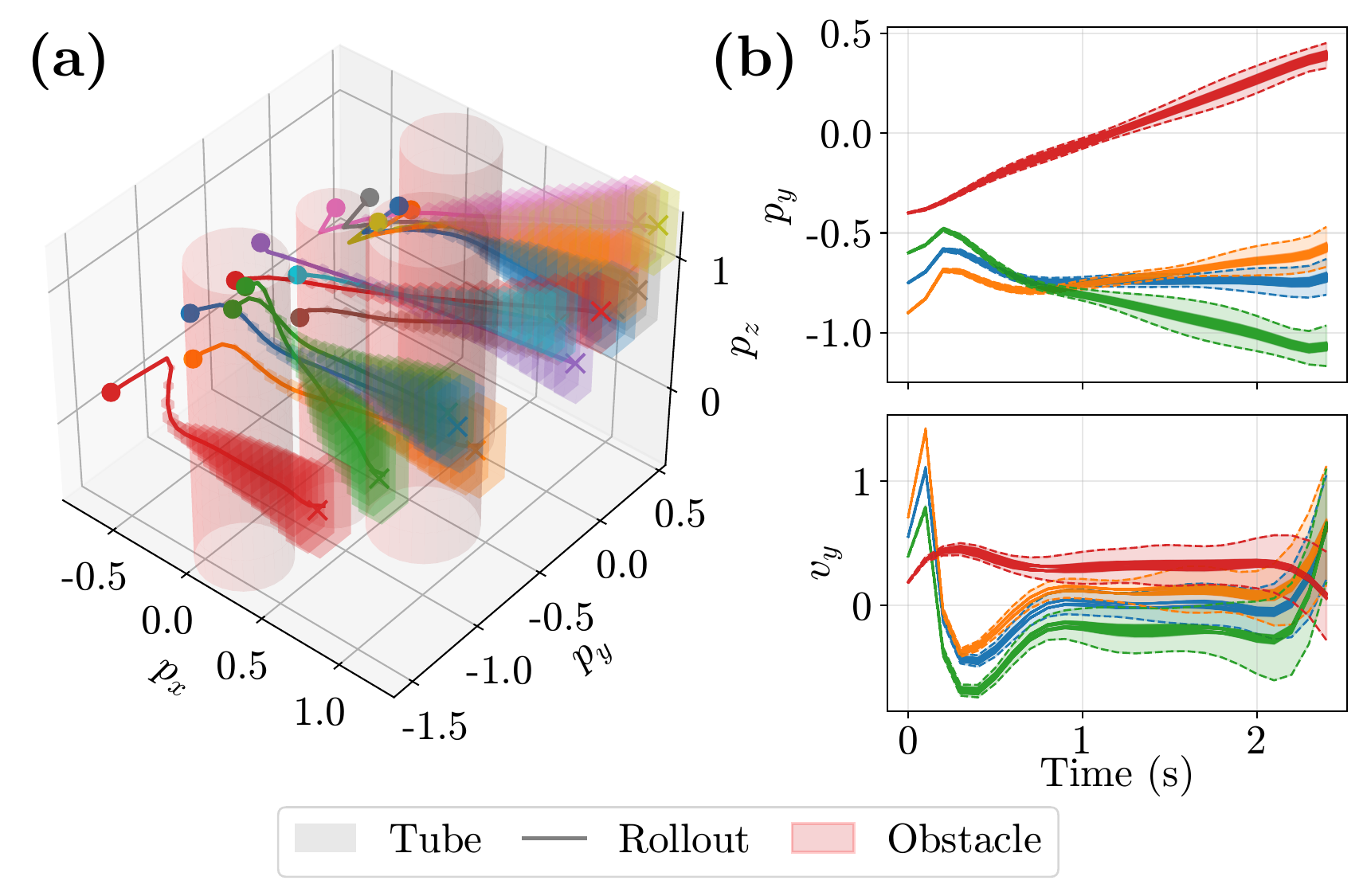}
    \caption{\textbf{System of 14 coupled 3D quadrotors (168D)}. \textbf{(a)}: Computed trajectories and tubes in an obstacle field. Even at high dimensions, our method is able to compute tight tubes. \textbf{(b)}: Tubes and 100 rollouts for states $p_y$ and $v_y$ for 4 quadrotors. 100\% of rollouts stay safely in the tubes.
    } 
    \label{fig:coupled_quadrotors}
\end{figure}

\subsubsection{Coupled 3D Quadrotor}
To demonstrate scalability to high dimensional systems, we evaluate our method on a coupled system of 14 3D quadrotors (168D) over a horizon of 25. The system is tasked with navigating through an obstacle field while the quadrotors are coupled via a spring force toward the group centroid. We show the planned trajectories and the tubes in Fig.~\ref{fig:coupled_quadrotors}\textbf{(a)}, and observe all rollouts remain in the tight tubes despite the high dimensional state space, demonstrating the method's scalability to state dimensionality. In Fig.~\ref{fig:coupled_quadrotors}\textbf{(b)}, we plot the tubes and the corresponding rollouts for $p_y$ and $v_y$ for 100 adversarial disturbances.

\begin{figure}[t]
    \centering
    \includegraphics[width=\linewidth]{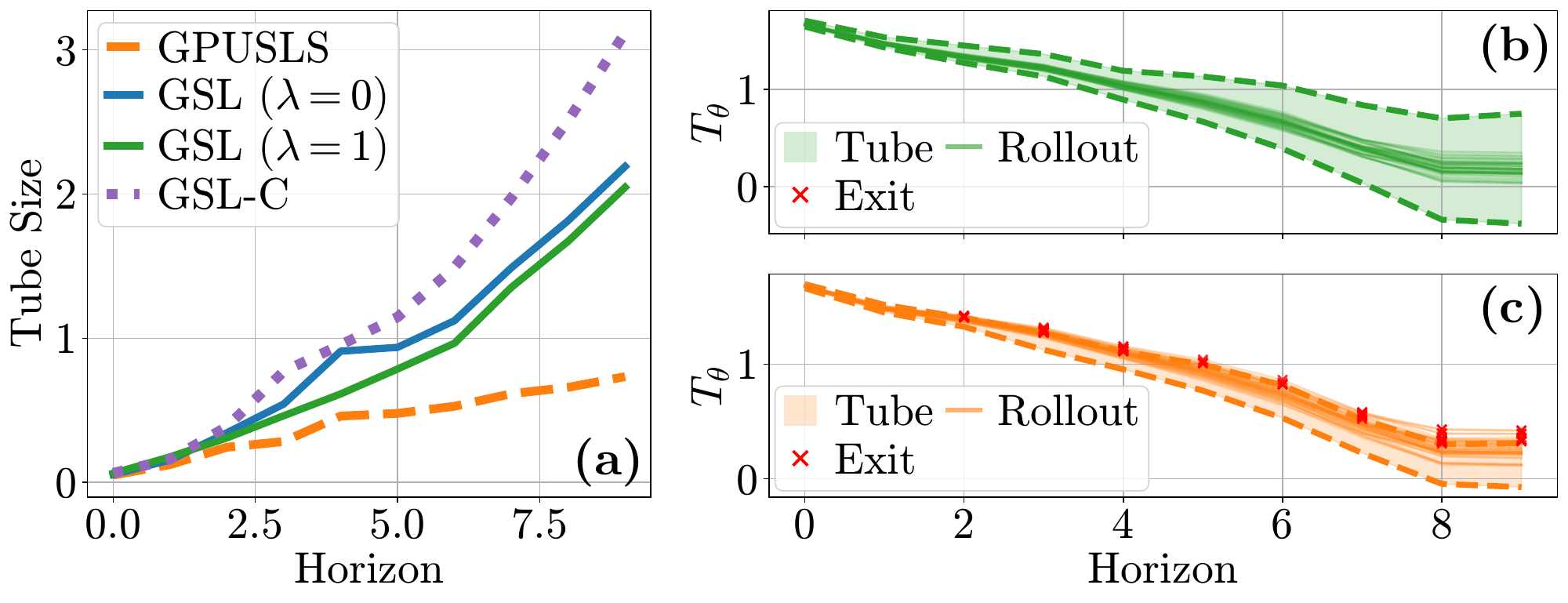}
    \caption{\textbf{Neural T-pusher (5D).} \textbf{(a)}: Tube size comparison across ablations of our method, showing nonzero-centered tubes and linearization error gradients reduce tube widths. \textbf{(b)}: Rollouts of our method; all trajectories remain in the computed robust tubes. \textbf{(c)}: Rollouts of GPUSLS. Without formally considering linearization error, rollouts leave the tube.
    }
    \label{fig:neural_tubes}
\end{figure}

\subsubsection{Neural T-pusher}
We demonstrate that our formulation can account for linearization error propagation through learned neural dynamics. We train an MLP-based model for the T-pusher system and task it with rotating and pushing the object to the goal configuration (Fig.~\ref{fig:neuralt_rollouts}). We evaluate our method on several ablations (Fig.~\ref{fig:neural_tubes}\textbf{(a)}), including \textbf{GPUSLS}, \textbf{GSL} (shorthand for GPUSLS-LEO), \textbf{GSL-C} (zero-centered zonotopes) and \textbf{GSL ($\lambda = 1$)}, which incorporates a penalty on the linearization error as defined in \ref{sec:gpu_implementation}. Compared to GPUSLS, GPUSLS-LEO produces tubes that are on average 99\% larger. However, Fig.~\ref{fig:neural_tubes}\textbf{(b)} shows that our method achieves 100\% containment of all rollouts, while Fig.~\ref{fig:neural_tubes}\textbf{(c)} illustrates how trajectories from GPUSLS fall out of the tube, showing the need to formally consider linearization error. 

Fig.~\ref{fig:neural_tubes}\textbf{(a)} shows that adding linearization error penalties can reduce conservativeness, with $\lambda = 1$ decreasing tube widths by 8\%. 
We also show the importance of nonzero-centered zonotopes, as using zero-centered zonotopes leads to 25\% larger tubes. We present rollouts and disturbance deviations in Fig.~\ref{fig:neuralt_rollouts}\textbf{(a)}. 
In Fig.~\ref{fig:neuralt_rollouts}\textbf{(b)}, we demonstrate the method's ability for closed-loop MPC on simulated contact dynamics, where the pusher successfully completes the task. We note that we do not formally consider learning error; however, it can be incorporated as in \cite{srinivasan2026safety, nath2026pixels}.

\begin{table}[h]
\centering
\scriptsize
\caption{Per iteration runtime breakdown (ms) across systems.}
\label{tab:runtime_breakdown}
\setlength{\tabcolsep}{3pt}
\begin{tabular}{lccccc}
\toprule
System & GPUSLS & Lin. Error Grad. & Remainder & Total & DT \\
\midrule
Satellite           & 11  & 3   & $<1$ & 15  & 1000 \\
Planar Quadrotor   & 18  & 1   & $<1$ & 20  & 150  \\
Coupled Quadrotor (96D) & 40  & --  & 48   & 88  & 100  \\
Neural T-Dynamics  & 630 & 100 & 32   & 762 & 1000 \\
\bottomrule
\end{tabular}
\end{table}

\begin{figure}[t]
    \centering
    \includegraphics[width=\linewidth]{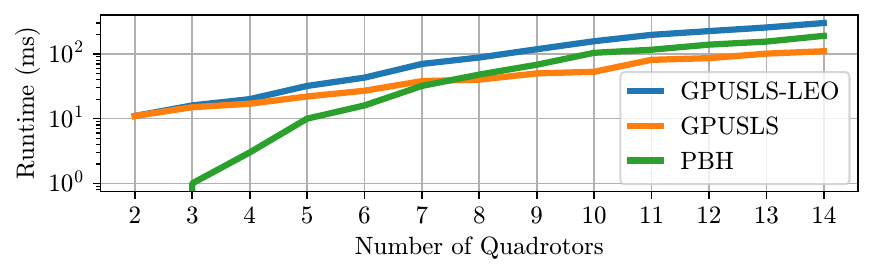}
    \caption{\textbf{Coupled Quadrotor.} Per-iteration runtime evaluation across increasing numbers of coupled quadrotors. GPUSLS scales favorably with state dimension, growing approximately logarithmically with state and control space. Path-Based Hessians (PBH) achieves low runtimes at small dimensions but scales more poorly as the state dimension increases. 
    }
    \label{fig:coupled_runtime_scaling}
\end{figure} 

\subsubsection{Runtime} \looseness-1We evaluate the runtime of our method across all systems. In Tab.~\ref{tab:runtime_breakdown}, we show the per iteration breakdown in milliseconds of the different components of our methodology. In all systems aside from the coupled quadrotor, the main computation load is in the GPUSLS, showing that our proposed modifications do not drastically increase run time. We also show that all total runtimes fall below the discretization time step DT, demonstrating the viability of our method under a real-time iteration MPC scheme \cite{fang2026safe}.

\looseness-1Finally, we evaluate the method's scaling with growing state and control dimensions using an increasing number of coupled quadrotors. Fig.~\ref{fig:coupled_runtime_scaling} shows that GPUSLS incurs higher initial runtime but scales favorably with state and control dimension due to its logarithmic scaling \cite{fang2026safe}. PBH on the other hand, exhibits low initial runtime, but can slow down in higher dimensions due to the cost of Hessian computations.

\section{Conclusion}
This paper presents a framework for computing tight, differentiable, and sound linearization error bounds (LEBs) for nonlinear analytic and NN dynamics and for using them in GPUSLS-LEO, a real-time GPU-parallel RNOCP solver based on SLS. The approach reduces conservativeness, outperforming state-of-the-art reachability analysis tools, while maintaining formal guarantees and achieving performance on systems with up to 168 states, real-time control rates (up to 67 Hz), and problems with $\approx 2\times 10^5$ decision variables. Future work will investigate how our LEBs can be extended to bound linearization errors for contact-rich planning \cite{li2026certified} and nonlinear output-feedback control, including through extensions of the SLS-based methods described in \cite{leeman2026vision, zhou2023safe, li2026robustness}.

\bibliographystyle{IEEEtran}
\bibliography{references_trunc}

\clearpage
\appendix
\appendices

In the appendix, we first discuss details on the system dynamics used in the experimental results (App. \ref{app:dynamic_models}). We then provide an algorithm block describing how sequential quadratic programming is used to solve GPUSLS-LEO (App. \ref{app:algorithm_block}). Next, we discuss and prove Proposition \ref{prop:sls} and a supporting lemma in App. \ref{app:sls}. We then provide proofs for the remaining theoretical results that were omitted from the main manuscript in App. \ref{app:proofs}. Finally, we discuss in more detail the Lagrange multiplier variable $\tau$ used in our framework in App. \ref{app:sls_consistency}.

\section{Dynamical System Definitions} \label{app:dynamic_models}
In this section, we give an overview of the discrete-time dynamical system definitions for each system. The discrete-time dynamics are obtained via forward Euler discretization with timestep $\Delta t$, $x_{k+1} = x_k + \Delta t (\dot{x}_k)$.

\subsection{Satellite}
We use the same satellite dynamics and experimental setup as \cite{leeman2025robust}. Specifically, we define the dynamics as,

\begin{subequations}
    \begin{align}
        \dot{x} &= \begin{bmatrix}
            \zeta (\omega)q \\
            I_S^{-1}(v - w \times (I_S\omega))
        \end{bmatrix} \\
        \zeta(\omega) &:= \frac{1}{2} \begin{bmatrix}
            0 & -\omega_1 & -\omega_2 & -\omega_3 \\
            \omega_1 & 0 & \omega_3 & -\omega_2 \\
            \omega_2 & -\omega_3 & 0 & \omega_1 \\
            \omega_3 & \omega_2 & -\omega_1 & 0
        \end{bmatrix}
    \end{align}
\end{subequations}
with states $x := (q, \omega) \in \mathbb{R}^7$, where $q \in \mathbb{R}^4$ is the attitude quaternion, $\omega \in \mathbb{R}^3$ the angular rotation rate, $v \in \mathbb{R}^3$ the input control torque, and $I_S = \mathrm{diag}(5, 2, 1)$ the symmetric inertia matrix. We define a constant disturbance scaling of $E = 5 \cdot 10^{-3} \cdot \mathrm{diag}(0, 0, 0, 0, 1, 1, 1)$.

\subsection{Planar Quadrotor} 
We consider a planar quadrotor with the following dynamics
\begin{subequations}
\begin{align}
\dot{x} &=
\begin{bmatrix}
v_x \\
v_y \\
\dot{\phi} \\
-\frac{1}{m}(u_1 + u_2)\sin(\phi) \\
\frac{1}{m}(u_1 + u_2)\cos(\phi) - g \\
\frac{L}{J}(u_2 - u_1)
\end{bmatrix}
\end{align}
\end{subequations}
The state is defined as $x := (p_x, p_y, \phi, v_x, v_y, \dot{\phi}) \in \mathbb{R}^6$ and input $u := (u_1, u_2) \in \mathbb{R}^2$. The position is denoted as $(p_x, p_y)$, $\phi$ the pitch angle, $(v_x, v_y)$ the translational velocities, and $\dot{\phi}$ the angular velocity. The inputs $(u_1, u_2)$ are the individual rotor thrusts. We define the mass $m = 2.0576$, gravitational acceleration $g = 9.81$, arm length $L = 0.25$ and moment of inertia $J = 0.01$. We define the disturbance scaling as $E = 5 \cdot 10^{-2} \cdot \text{diag}(0, 0, 0, 1, 1, 0)$.

\subsection{Coupled 3D Quadrotor}
We consider a system of $N$ quadrotors. For a given quadrotor $i$, we define its dynamics as
\begin{subequations}
\begin{align}
\dot{\xi}_i &=
\begin{bmatrix}
\dot{p}_{x,i} \\
\dot{p}_{y,i} \\
\dot{p}_{z,i} \\
\dot{\phi}_i \\
\dot{\theta}_i \\
\dot{\psi}_i \\
\dot{v}_{x,i} \\
\dot{v}_{y,i} \\
\dot{v}_{z,i} \\
\dot{p}_i \\
\dot{q}_i \\
\dot{r}_i
\end{bmatrix}
=
\begin{bmatrix}
v_{x,i} \\
v_{y,i} \\
v_{z,i} \\
p_i + (q_i \sin \phi_i + r_i \cos \phi_i)\tan \theta_i \\
q_i \cos \phi_i - r_i \sin \phi_i \\
\frac{q_i \sin \phi_i + r_i \cos \phi_i}{\cos \theta_i} \\
\frac{T_i}{m}\left(\cos \psi_i \sin \theta_i \cos \phi_i + \sin \psi_i \sin \phi_i\right) \\
\frac{T_i}{m}\left(\sin \psi_i \sin \theta_i \cos \phi_i - \cos \psi_i \sin \phi_i\right) \\
\frac{T_i}{m}\left(\cos \theta_i \cos \phi_i\right) - g \\
\frac{J_y - J_z}{J_x} q_i r_i + \frac{\tau_{\phi,i}}{J_x} \\
\frac{J_z - J_x}{J_y} p_i r_i + \frac{\tau_{\theta,i}}{J_y} \\
\frac{J_x - J_y}{J_z} p_i q_i + \frac{\tau_{\psi,i}}{J_z}
\end{bmatrix}.
\end{align}
\end{subequations}
The state of quadrotor $i$ is defined as $\xi_i := (p_{x,i}, p_{y,i}, p_{z,i}, \phi_i, \theta_i, \psi_i, v_{x,i}, v_{y,i}, v_{z,i}, p_i, q_i, r_i) \in \mathbb{R}^{12}$ where $(p_{x,i}, p_{y,i}, p_{z,i})$ denotes position,  $(\phi_i, \theta_i, \psi_i)$ the roll, pitch, and yaw angles,  $(v_{x,i}, v_{y,i}, v_{z,i})$ the translational velocities, and $(p_i, q_i, r_i)$ denote body angular velocities. The input is defined as $\upsilon_i := (T_i, \tau_{\phi,i}, \tau_{\theta,i}, \tau_{\psi,i}) \in \mathbb{R}^4$, where  $T_i$ is the collective thrust and $(\tau_{\phi,i}, \tau_{\theta,i}, \tau_{\psi,i})$ are the body torques.

For a system of $N$ coupled quadrotors, we stack the states and inputs, giving the overall state and control vectors, 
\begin{equation}
    x := (\xi_1, \ldots, \xi_N) \in \mathbb{R}^{12N}, \quad u := (\upsilon_1, \ldots, \upsilon_N) \in \mathbb{R}^{4N}.
\end{equation}
We additionally incorporate a centroid spring-damper coupling in the translational dynamics. We define the centroid position and velocity as
\begin{equation}
    \bar{p}_k = \frac{1}{N} \sum_{i=1}^{N} p_{i,k}, \quad \bar{v}_k = \frac{1}{N} \sum_{i=1}^N v_{i,k}.
\end{equation}
The coupling acceleration applied to quadrotor $i$ is
\begin{equation}
    a_{i,k}^\text{coup} = -\alpha N(p_{i,k} - \bar{p}_k) - \gamma (v_{i,k} - \bar{v}_k)
\end{equation}
where $\alpha$ and $\gamma$ are the coupling gains, with $\alpha$ specifying the stiffness of the spring term and $\gamma$ controlling the damping coefficient. In our experiments, we set $\alpha = 0.75$ and $\gamma = 0.25$. Therefore, the discrete-time dynamics become

\begin{equation}
\xi_{i,k+1}
=
\xi_{i,k}
+
\Delta t
\begin{bmatrix}
\dot{p}_{x,i,k} \\
\dot{p}_{y,i,k} \\
\dot{p}_{z,i,k} \\
\dot{\phi}_{i,k} \\
\dot{\theta}_{i,k} \\
\dot{\psi}_{i,k} \\
\dot{v}_{i,k} + a^{\mathrm{coup}}_{i,k} \\
\dot{p}_{i,k} \\
\dot{q}_{i,k} \\
\dot{r}_{i,k}
\end{bmatrix}.
\end{equation}

We use a disturbance scaling matrix of $E =  2 \cdot 10^{-2} \cdot I_{12N}$

\section{Algorithm Block}\label{app:algorithm_block}
\begin{algorithm}
\caption{GPUSLS-LEO}
\label{alg:gpusls_leo}
\begin{algorithmic}[1]
\Require $\bar x_0$, $J$, $f(x,u)$, $E(x)$, $g_k(x,u)$, $g^f(x)$
\While{\textsc{SQP not converged}}
    \State $\mathbf{A}$, $\mathbf{B}$, $\mathbf{G}$ $\leftarrow \textsc{SQP: Linearize System}$
    \State $\mathbf{Q}, \mathbf{M}, \mathbf{R}, \mathbf{q}, \mathbf{r} \leftarrow \textsc{SQP: Quadraticize Cost}$
    \While{\textsc{SLS not converged}}
        \State $\mathbf{z}$, $\mathbf{v}$ $\leftarrow \textsc{Optimize Nominal Trajectory}$
        \State $\bm{\sigma}, \bm{\mu} \leftarrow \textsc{Compute interval}$
        \State $\mathbf{\bar{E}}$, $\mathbf{\bar{c}}$ $\leftarrow \textsc{Compute residual error zonotope}$
        \State $\bm{\tau}, \bm{\mathcal{Q}} \leftarrow \textsc{Update duals and cost}$
        \State $\mathbf{\Phi}^\textrm{x}, \mathbf{\Phi}^\textrm{u} \leftarrow \textsc{Optimize Controller}$
        \State $h(\mathbf{\Phi}, \bm{\mathcal{D}}), h^f(\mathbf{\Phi}, \bm{\mathcal{D}}) \leftarrow \textsc{Update Tightenings}$
    \EndWhile
\EndWhile \\
\Return $\mathbf{z}, \mathbf{v}, \bm{\Phi}^x, \bm{\Phi}^u$
\end{algorithmic}
\end{algorithm}
\noindent Here, $\mathbf{Q}, \mathbf{M}, \mathbf{R}, \mathbf{q}, \mathbf{r}$ correspond to the quadratic approximation of the nominal cost. 
\section{On Solving for $\Phit$}\label{app:sls}

Solving \eqref{eq:fastsls_qp} yields optimal values for $\Phix$ and $\Phiu$ which penalize the size of the tightenings \eqref{eq:surrogate_cost_h0}. However, this leads to terms in \eqref{eq:fastsls_qp_cost} that scale the response matrices $\Phix$ and $\Phiu$ by $\bar E_j$ in the cost~\eqref{eq:fastsls_qp_cost}. This remains a quadratic cost in $\Phit$ but no longer maintains the structure of an LQR cost function. To address this, we instead solve \eqref{eq:fastsls_qp_reform} which has an LQR structure and can be solved via Riccati recursions. For self-containment, we restate the original problem \eqref{eq:fastsls_qp} and the reformulated problem \eqref{eq:fastsls_qp_reform}, and the equivalent optimizer Proposition (Prop. \ref{prop:sls}) below: 

\noindent\textbf{Original Problem:}
\begin{align}
    \hspace{-3pt}\min_{\Phix, \Phiu} \ 
    &\sum_{j=0}^{N-1} 
    \sum_{k=j}^{N-1} \| \mathcal{Q}_{k,j} \Phit_{k,j}\bar{E}_j \|^2_{\mathcal{F}} + \|\mathcal{Q}_{N,j} 
     \Phix_{N,j} \bar{E}_j\|^2_{\mathcal{F}}\tag{\ref{eq:fastsls_qp_cost}} \\
    \text{s.t.}\ \ & \Phix_{k+1,j} = A_k \Phix_{k,j} + B_k \Phiu_{k,j}, \tag{\ref{eq:fastsls_qp_x_prop}}\\
    & \Phix_{j+1,j} = I_{n_x},~\forall j \in [N],~ \forall k \in [j+1,N-1],\hspace{-4pt} 
    \tag{\ref{eq:fastsls_qp_x_init}}
\end{align}

\noindent\textbf{Reformulated Problem:}
\begin{align}
    \hspace{-5pt}\min_{\Phix, \Phiu} \ 
    &\textstyle\sum_{j=0}^{N-1}\hspace{-3pt}
    \sum_{k=j}^{N-1} \| \mathcal{Q}_{k,j} \Phit_{k,j}\|^2_{\mathcal{F}} + \|\mathcal{Q}_{N,j} 
     \Phix_{N,j}\|^2_{\mathcal{F}}\tag{\ref{eq:fastsls_qp_cost_reform}} \\
    \text{s.t.}\ \ & \Phix_{k+1,j} = A_k \Phix_{k,j} + B_k \Phiu_{k,j}, \tag{\ref{eq:fastsls_qp_err_dyn}} \\
    & \Phix_{j+1,j} = I_{\nx},~ \forall j \in [N],~\forall k \in [j+1,N-1]
    \tag{\ref{eq:fastsls_qp_cost_reform_init}}
\end{align}

In Prop. \ref{prop:sls}, we prove that the optimizer sets of \eqref{eq:fastsls_qp} and \eqref{eq:fastsls_qp_reform} are equivalent.
\begingroup
\renewcommand{\theproposition}{\ref{prop:sls}}
\begin{proposition}[Equivalent Optimizers]
Let $(\TPhix,\TPhiu)$ be an optimizer of \eqref{eq:fastsls_qp}, and let $(\HPhix,\HPhiu)$ be an optimizer of the reformulated problem \eqref{eq:fastsls_qp_reform}. Suppose that $\bar E_j$ is right-invertible for all $j \in [N]$, i.e., there exists $\bar E_j^{\ddagger}$ such that $\bar E_j \bar E_j^{\ddagger} = I_{\nx}$. Then the two optimizers coincide: $\TPhix = \HPhix, \TPhiu = \HPhiu$.
\end{proposition}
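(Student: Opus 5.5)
The plan is to reduce the original problem \eqref{eq:fastsls_qp} to an LQR problem with a matrix-valued initial condition. I then use the fact that the optimal LQR trajectory depends linearly on the initial state, and recover $\TPhit$ from $\TPhit\bar E_j$ via right-invertibility.

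\emph{Step 1 (decoupling over $j$).} The constraints \eqref{eq:fastsls_qp_x_prop}--\eqref{eq:fastsls_qp_x_init} and both costs split into independent blocks indexed by $j\in[N]$. Each block involves only the column-block $\{\Phix_{k,j},\Phiu_{k,j}\}_{k}$. It therefore suffices to fix $j$ and compare the two subproblems. The reformulated block has Frobenius cost, so it further splits into $\nx$ independent vector LQR problems. Each has the LTV dynamics $(A_k,B_k)$, stage weights $\mathcal{Q}_{k,j}$, terminal weight $\mathcal{Q}_{N,j}$, and initial state $e_i$ at time $j+1$. I will assume, as in \cite{leeman2024fast}, that $\mathcal{Q}_{k,j}$ penalizes the input block with a positive definite weight. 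Then each vector LQR is strictly convex in the free inputs and has a unique minimizer, given by the Riccati feedback $u_k=K_kx_k$. The gains $K_k$ do not depend on the initial state. Hence the unique optimum of the reformulated block with initial condition $X_0$ is $\HPhit_{\cdot,j}X_0$, and in particular it equals $\HPhit_{\cdot,j}$ when $X_0=I$. This linearity also holds for any matrix initial condition $X_0$, column by column.

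\emph{Step 2 (change of variables).} For a feasible $\Phit_{\cdot,j}$ of \eqref{eq:fastsls_qp}, set $\Psit_{k,j}:=\Phit_{k,j}\bar E_j$. Then $\Psit$ satisfies the same dynamics with $\Psix_{j+1,j}=\bar E_j$. Its cost in the Frobenius/LQR form equals the original cost of $\Phit$. Define the auxiliary LQR problem $(\star)$ over all $\Psit$ satisfying these dynamics and initial condition. By Step 1, $(\star)$ has the unique minimizer $\HPhit_{\cdot,j}\bar E_j$. Since $\HPhit$ is feasible for \eqref{eq:fastsls_qp}, the point $\HPhit_{\cdot,j}\bar E_j$ lies in the image $\{\Phit\bar E_j\}$ of the original feasible set.

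\emph{Step 3 (comparing optimizers).} The image of the original feasible set is contained in the feasible set of $(\star)$. Optimality of $\TPhit$ in \eqref{eq:fastsls_qp} gives $\mathrm{cost}(\TPhit_{\cdot,j}\bar E_j)\le \mathrm{cost}(\HPhit_{\cdot,j}\bar E_j)=\min(\star)$. Hence $\TPhit_{\cdot,j}\bar E_j$ is also a minimizer of $(\star)$. By uniqueness, $\TPhit_{\cdot,j}\bar E_j=\HPhit_{\cdot,j}\bar E_j$. Right-multiplying by $\bar E_j^{\ddagger}$ and using $\bar E_j\bar E_j^{\ddagger}=I_{\nx}$ gives $\TPhit_{\cdot,j}=\HPhit_{\cdot,j}$. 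Doing this for every $j$ yields $\TPhix=\HPhix$ and $\TPhiu=\HPhiu$.

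The main obstacle is Step 1, specifically uniqueness of the minimizer and linearity of the LQR solution in the initial condition. Both need a strict-convexity assumption on the input weighting inside $\mathcal{Q}_{k,j}$, which I would state explicitly from the definitions in App.~\ref{app:sls_consistency}. Without uniqueness, the argument only shows that $\HPhit$ is an optimizer of \eqref{eq:fastsls_qp} and that both optimizer sets coincide after multiplication by $\bar E_j$. Right-invertibility is exactly what removes that multiplication.
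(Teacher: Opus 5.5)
Your proof is correct and follows the same route as the paper. Both use the substitution $\Psit_{k,j}=\Phit_{k,j}\bar E_j$ to turn \eqref{eq:fastsls_qp} into the LQR problem \eqref{eq:fastsls_qp_var} with initial condition $\bar E_j$, both rely on Riccati gains that do not depend on that initial condition, and both cancel $\bar E_j$ with $\bar E_j^{\ddagger}$.

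Your version is also more careful than the paper's. The paper substitutes $\Psit_{k,j}\bar E_j^{\ddagger}=\TPhit_{k,j}$ without noting that $\Phit\mapsto\Phit\bar E_j$ maps the feasible set of \eqref{eq:fastsls_qp} only into, not onto, that of \eqref{eq:fastsls_qp_var}, since $\bar E_j\in\mathbb{R}^{\nx\times 2\nx}$. Your relaxation-plus-uniqueness step is what justifies that identification; the uniqueness you assume is in fact guaranteed by $\tilde Q,\tilde R\succ 0$ in $\mathcal{Q}_{k,j}$. The same step also lets you avoid Lemma~\ref{lem:right_invertible} entirely.
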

\addtocounter{proposition}{-1}
\endgroup

\begin{proof}
    Observe that \eqref{eq:fastsls_qp} can be rewritten into \eqref{eq:fastsls_qp_var} with the variable change $\Psit_{k,j} = \Phit_{k,j}\bar{E}_j$, since by using \eqref{eq:fastsls_qp_x_prop} and \eqref{eq:fastsls_qp_x_init} we have that, 
    \begin{eqnarray*}
        \Phix_{k+1,j}\bar{E}_j = A_k \Phix_{k,j}\bar{E}_j + B_k \Phiu_{k,j}\bar{E}_j &\iff \text{Eq. \eqref{eq:fastsls_qp_x_prop_var}},\\
    \text{and}~~~~~~~~~~~~~~~~~~~~~~~~~~~~~~~~~~~~~~~~~~~&\\
        \Phix_{j+1,j}\bar{E}_j = I_{n_x}\bar{E}_j &\iff \text{Eq. \eqref{eq:fastsls_qp_x_init_var}.}
    \end{eqnarray*}

    \begin{subequations} \label{eq:fastsls_qp_var}
    \begin{align}
        \hspace{-3pt}\min_{\Psix, \Psiu} \ 
        &\sum_{j=0}^{N-1}\hspace{-3pt}
        \sum_{k=j}^{N-1} \| \mathcal{Q}_{k,j} \Psit_{k,j}\|^2_{\mathcal{F}} + \|\mathcal{Q}_{N,j} 
         \Psix_{N,j}\|^2_{\mathcal{F}}\label{eq:fastsls_qp_cost_var} \\
        \text{s.t.} \ & \Psix_{k+1,j} = A_k \Psix_{k,j} + B_k \Psiu_{k,j},\label{eq:fastsls_qp_x_prop_var} \\
        & \Psix_{j+1,j} = \bar{E}_j,~\forall j \in [N],~\forall k \in [j+1,N-1]
        \label{eq:fastsls_qp_x_init_var}
    \end{align}
    \end{subequations}
    From \cite{fang2026safe}, \eqref{eq:fastsls_qp_var} can be solved via the parallel Riccati recursions,  
    \begin{equation} \label{eq:backward_prop}
    \begin{aligned}
        \mathcal{G}_{N,j} &= \mathcal{Q}^{\text{x}}_{N,j}, \quad \quad 
        \mathcal{K}_{k,j} = -\mathcal{G}_{k,j} \mathcal{B}_{k,j}, \\
        \mathcal{P}_{k,j} &= \mathcal{Q}^\text{x}_{k,j} + \big(A_k\big)^\top \mathcal{P}_{k+1,j}A_k + \big(\mathcal{K}_{k,j}\big)^\top\mathcal{B}_{k,j}, \\
        \mathcal{G}_{k,j} &= \big(\mathcal{Q}^{\text{u}}_{k,j} + \big(B_k\big)^\top\mathcal{P}_{k+1,j}B_k\big)^{-1}, \\
        \mathcal{B}_{k,j} &= \mathcal{Q}^{\text{ux}}_{k,j} + \big(B_k\big)^\top \mathcal{P}_{k+1,j} A_k,
    \end{aligned}
    \end{equation}
    and parallel forward propagations
    \begin{equation} \label{eq:forward_prop}
        \begin{aligned}
             & \Psix_{j+1,j} = \bar{E}_j, \qquad \Psiu_{k,j} = \mathcal{K}_{k,j} \Psix_{k,j}, \\
             & \Psix_{k+1,j} = (A_k + B_k \mathcal{K}_{k, j}) \Psix_{k,j}. \\
        \end{aligned}
    \end{equation}
    Then, since $\bar E_j$ is right-invertible and \eqref{eq:fastsls_qp_var} represents the system level constraint (SLC) assumed in Lemma \ref{lem:right_invertible}, the controller gains $\Psit$ obey, 
    \begin{equation} \label{eq:forward_prop}
    \hspace{-6pt}\begin{aligned}
         &\Psix_{j+1,j} = \bar{E}_j,~~~~~~~~~\Psiu_{k,j}\bar{E}_j^{\ddagger} = \mathcal{K}_{k,j} \Psix_{k,j}\bar{E}_j^{\ddagger}\\
         &\Psix_{k+1,j}\bar{E}_j^{\ddagger} = \left(A_k + B_k \mathcal{K}_{k,j}\right)\Psix_{k,j}\bar{E}_j^{\ddagger}. \\
    \end{aligned}
\end{equation}
    Then, by unraveling the recursion \eqref{eq:forward_prop}, we have for $k > j$,  
    \begin{align}
        \Psit_{k,j}\bar{E}_j^{\ddagger} 
        = \begin{bmatrix} I_{\nx} \\ \mathcal{K}_{k,j} \end{bmatrix} \left(\prod_{m=j+1}^{k-1}\left(A_m + B_m \mathcal{K}_{m,j}\right)\right)\Psix_{j+1,j}\bar{E}_j^{\ddagger}, \label{eq:unravel_psi_t}
    \end{align}
    while the remaining gains ($\Psi_{k,j}$ where $k \leq j$) are zero, where we define the ordered matrix product $\prod_{i=1}^{n} R_i := R_nR_{n-1}\cdots R_{1}$.
    
    To recover the optimal gains for \eqref{eq:fastsls_qp}, we use the right-invertibility property of $\bar E_j$ and substitute $\Psit_{k,j}\bar{E}^{\ddagger}_j = \TPhit_{k,j}$ into \eqref{eq:unravel_psi_t}: 
    \begin{align}
        \TPhit_{k,j} &= \begin{bmatrix} I_{\nx} \\ \mathcal{K}_{k,j} \end{bmatrix} \left(\prod_{m=j+1}^{k-1}\left(A_m + B_m \mathcal{K}_{m,j}\right)\right)\TPhix_{j+1,j}\nonumber\\ 
        &=\begin{bmatrix} I_{\nx} \\ \mathcal{K}_{k,j} \end{bmatrix} \left(\prod_{m=j+1}^{k-1}\left(A_m + B_m \mathcal{K}_{m,j}\right)\right)I_{\nx}.
    \end{align}
    The second equality stems from $\TPhix_{j+1,j}$ satisfying \eqref{eq:fastsls_qp_x_init}.
    Now, observe \eqref{eq:fastsls_qp_reform} has the same Riccati recursions as \eqref{eq:backward_prop}, thus resulting in an optimal gain, similarly structured to \eqref{eq:unravel_psi_t} where $E_j^{\ddagger}$ is replaced with $I_{\nx}$, $\HPhit$: 
    \begin{align}
        \HPhit_{k,j} = \begin{bmatrix} I_{\nx} \\ \mathcal{K}_{k,j} \end{bmatrix} \left(\prod_{m=j+1}^{k-1}\left(A_m + B_m \mathcal{K}_{m,j}\right)\right)\HPhix_{j+1,j}, \label{eq:unravel_phi_t}
    \end{align}
    where $\HPhix_{j+1,j} = I_{n_x}$, as constrained by \eqref{eq:fastsls_qp_cost_reform_init}. Hence $\forall \,k,j\in[N]~\TPhit_{k,j} =\HPhit_{k,j}$, so $\TPhit =\HPhit$.
\end{proof}

\begin{lemma}\label{lem:right_invertible}
Let, $\bar{\bm{E}} := \textrm{blk\_diag}(I_{\nx}, \bar{E}_0,\dots,\bar{E}_{n-1})$, where for all $i\in[N], \bar E_i \in \mathbb{R}^{\nx \times n_{w}^{(i)}}$ is right-invertible, and $n_{w}^{(i)} \geq n_x$. 
Enforcing the uncertain LTV dynamics
\begin{equation}\label{eq:ltv_proof}
    x_{k+1} = A_kx_k + B_ku_k + \bar{E}_kw_k,~w_k \in \mathcal{E}_{n_w^{(k)}},
\end{equation}
is equivalent to enforcing the system level constraint \eqref{eq:slc}, i.e., \eqref{eq:ltv_proof}$\iff$\eqref{eq:slc},
\begin{equation}\label{eq:slc}
    \begin{bmatrix}
        (I_{N\nx}-\bm{Z}\bm{A}) & -\bm{Z}\bm{B}
    \end{bmatrix}
    \begin{bmatrix}
        \Phix \\ \Phiu
    \end{bmatrix} = \bar{\bm{E}}.
\end{equation}
Furthermore, the controller gain mapping for the SLC in \eqref{eq:slc} is,
\begin{equation}\label{eq:controller_gains}
    \mathcal{K} = (\Phiu \bm E^\ddagger)(\Phix \bm{\bar{E}}^\ddagger)^{-1} = (\Phiu \bm{\bar{E}}^{\top})(\Phix \bm {\bar{E}}^{\top})^{-1},
\end{equation}
where $\bm{\bar{E}}^{\ddagger} = \textrm{blk\_diag}(I,\bar{E}_0^{\ddagger},\dots,\bar{E}_{n-1}^{\ddagger})$.
Lastly, given $\mathcal{K}$, the forward propagation of disturbance gains obeys:
\begin{equation} \label{eq:forward_prop_phi}
    \hspace{-6pt}\begin{aligned}
         &\Phix_{j+1,j} = \bar{E}_j,~~~~~~~~~\Phiu_{k,j}\bar{E}_j^{\ddagger} = \mathcal{K}_{k,j} \Phix_{k,j}\bar{E}_j^{\ddagger}\\
         &\Phix_{k+1,j}\bar{E}_j^{\ddagger} = \left(A_k + B_k \mathcal{K}_{k,j}\right)\Phix_{k,j}\bar{E}_j^{\ddagger}. \\
    \end{aligned}
\end{equation}
\end{lemma}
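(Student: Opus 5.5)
I will treat the statement as three claims and prove them in order: the equivalence \eqref{eq:ltv_proof}$\iff$\eqref{eq:slc}, the gain formula \eqref{eq:controller_gains}, and the forward propagation \eqref{eq:forward_prop_phi}.

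\textbf{Stacked closed-loop maps.} I first stack the LTV dynamics over the horizon with the block-downshift operator $\bm{Z}$, $\bm{A}=\mathrm{blk\_diag}(A_0,\dots,A_{N-1},\cdot)$ and $\bm{B}$ defined analogously, following \cite{DBLP:journals/arc/AndersonDLM19}. I write $\mathbf{x}=[x_0;\dots;x_N]$, $\mathbf{u}$, and $\mathbf{w}=[\bar x_0; w_0;\dots]$. Then \eqref{eq:ltv_proof} reads $\mathbf{x}=\bm{Z}\bm{A}\mathbf{x}+\bm{Z}\bm{B}\mathbf{u}+\bar{\bm{E}}\mathbf{w}$, where the shift by $\bm{Z}$ is absorbed into the block placement of $\bar{\bm{E}}$ as in the paper's convention.

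For ($\Rightarrow$), let the causal disturbance-feedback policy \eqref{eq:disturbance_feedback} induce closed-loop maps $\mathbf{x}=\Phix\mathbf{w}$ and $\mathbf{u}=\Phiu\mathbf{w}$. Here $\Phix,\Phiu$ are block lower-triangular and absorb $\bar E_j$, so that $\Phix_{j+1,j}=\bar E_j$ as in \eqref{eq:forward_prop_phi}. Substituting gives $[(I-\bm{Z}\bm{A})\ \ -\bm{Z}\bm{B}][\Phix;\Phiu]\mathbf{w}=\bar{\bm{E}}\mathbf{w}$ for all $\mathbf{w}$ in the product of unit boxes. These boxes have nonempty interior, so they span the full space, and the matrix identity \eqref{eq:slc} follows.

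For ($\Leftarrow$), take any $(\Phix,\Phiu)$ satisfying \eqref{eq:slc}, multiply on the right by $\mathbf{w}$, and set $\mathbf{x}=\Phix\mathbf{w}$, $\mathbf{u}=\Phiu\mathbf{w}$. These trajectories satisfy \eqref{eq:ltv_proof}. Since $I-\bm{Z}\bm{A}$ is invertible (unit lower triangular), $\Phix$ is uniquely determined by $\Phiu$, and causality follows from the block lower-triangular structure.

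\textbf{Gain formula.} For \eqref{eq:controller_gains} I will use right-invertibility. Since $\bar{\bm{E}}\bar{\bm{E}}^\ddagger=I$, I can define $\tilde\Phi^{x}:=\Phix\bar{\bm{E}}^\ddagger$ and $\tilde\Phi^{u}:=\Phiu\bar{\bm{E}}^\ddagger$. Right-multiplying \eqref{eq:slc} by $\bar{\bm{E}}^\ddagger$ shows that these satisfy the standard SLC with right-hand side $I$. The standard SLS result \cite{DBLP:journals/arc/AndersonDLM19} then gives that $\tilde\Phi^{x}$ is invertible (block lower-triangular with identity diagonal blocks) and that $u=\tilde\Phi^{u}(\tilde\Phi^{x})^{-1}x$ realizes the response; this yields the first expression in \eqref{eq:controller_gains}. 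The second expression follows because $\bar E_j^\ddagger=\bar E_j^\top(\bar E_j\bar E_j^\top)^{-1}$. The factor $(\bar E_j\bar E_j^\top)^{-1}$ is block diagonal, so it appears on the right of both $\Phiu\bar{\bm{E}}^\top$ and $\Phix\bar{\bm{E}}^\top$ and cancels in $(\cdot)(\cdot)^{-1}$.

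\textbf{Forward propagation.} Finally, \eqref{eq:forward_prop_phi} comes from reading \eqref{eq:slc} block-column by block-column. Column $j$ gives $\Phix_{j+1,j}=\bar E_j$ and $\Phix_{k+1,j}=A_k\Phix_{k,j}+B_k\Phiu_{k,j}$. The gain relation $\Phiu=\mathcal{K}\Phix$ holds after right-multiplication by $\bar{\bm{E}}^\ddagger$ by the previous step. Restricting it to the time-$(k,j)$ block, i.e.\ using the per-column LQR gains $\mathcal{K}_{k,j}$ from \eqref{eq:backward_prop}, I substitute it into the propagation and right-multiply by $\bar E_j^\ddagger$.

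\textbf{Main obstacle.} I expect the delicate point to be the gain step, specifically justifying the per-column identification $\Phiu_{k,j}\bar E_j^\ddagger=\mathcal{K}_{k,j}\Phix_{k,j}\bar E_j^\ddagger$. Column-dependent gains make $\mathcal{K}$ non-standard, so I would argue column-wise, treating each $j$ as an independent LTV problem started at $\bar E_j$. I would also need to check that the invertibility of $\Phix\bar{\bm{E}}^\ddagger$ holds, which follows from its identity diagonal blocks.
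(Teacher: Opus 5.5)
Your treatment of \eqref{eq:slc} and \eqref{eq:controller_gains} is the paper's reduction run in the opposite direction. The paper passes to $\bm{\delta}=\bar{\bm{E}}\bm{w}$, applies the standard SLC to $(\HPhix,\HPhiu)$, and obtains $\Phix=\HPhix\bar{\bm{E}}$, $\Phiu=\HPhiu\bar{\bm{E}}$. You instead right-multiply \eqref{eq:slc} by $\bar{\bm{E}}^{\ddagger}$. The cancellation of $(\bar{\bm{E}}\bar{\bm{E}}^\top)^{-1}$ for the second expression is fine. However, the two directions are not interchangeable when $n_w^{(j)}>\nx$, which is the case of interest ($\bar E_k\in\R^{\nx\times 2\nx}$). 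Right-multiplying by $\bar{\bm{E}}^{\ddagger}$ discards whatever part of $\Phit$ acts on $\ker\bar E_j$. So $u=\mathcal K x$ with $\mathcal K=(\Phiu\bar{\bm{E}}^{\ddagger})(\Phix\bar{\bm{E}}^{\ddagger})^{-1}$ reproduces $\Phit\bar{\bm{E}}^{\ddagger}\bar{\bm{E}}$, not $\Phit$. Meanwhile your $(\Leftarrow)$ direction admits arbitrary solutions of \eqref{eq:slc}, in which $\Phiu$ is unconstrained. For instance, take $\nx=1$, $A_k=B_k=1$, $\bar E_j=[1\ \ 1]$, and let $\Phiu_{1,0}=[1\ \ {-1}]$ be the only nonzero block. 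Then \eqref{eq:slc} gives $\Phix_{2,0}=[2\ \ 0]$, yet $\Phiu\bar{\bm{E}}^{\ddagger}=0$, so $\mathcal K=0$ and the closed loop has $\Phix_{2,0}=[1\ \ 1]$. For your claim that $u=\mathcal K x$ ``realizes the response'' to hold, you must add and use the factorization $\Phit=\Phit\bar{\bm{E}}^{\ddagger}\bar{\bm{E}}$. This holds for responses induced by feedback on $x$ (or on $\bar E_j w_j$), and it is exactly what the paper's construction $\Phit=\HPhit\bar{\bm{E}}$ provides.

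The forward-propagation step, which you rightly flag, does not go through. The $(k,j)$ block of $\Phiu\bar{\bm{E}}^{\ddagger}=\mathcal K\Phix\bar{\bm{E}}^{\ddagger}$ is $\sum_{i=j+1}^{k}\mathcal K_{k,i}\Phix_{i,j}\bar E_j^{\ddagger}$, with $\mathcal K_{k,i}$ mapping $x_i$ to $u_k$; it is not $\mathcal K_{k,j}\Phix_{k,j}\bar E_j^{\ddagger}$. The per-column Riccati gains of \eqref{eq:backward_prop} are a different object, depending on cost weights absent from \eqref{eq:slc}. So ``restricting to the $(k,j)$ block, i.e.\ using the per-column LQR gains'' is a non sequitur, and treating each column as an independent LTV problem does not help. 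Take the same scalar data with $N\ge 3$ and $\Phiu_{1,0}=-[1\ \ 1]$, $\Phiu_{2,0}=[1\ \ 1]$, other blocks zero. This satisfies \eqref{eq:slc}, factors through $\bar E_0$, and is realized by the causal feedback $u_1=x_0-x_1$, $u_2=x_1-x_0$. Yet $\Phix_{2,0}\bar E_0^{\ddagger}=0$ while $\Phiu_{2,0}\bar E_0^{\ddagger}=1$, so the identity in \eqref{eq:forward_prop_phi} fails at $(k,j)=(2,0)$ for every choice of $\mathcal K_{2,0}$. Hence \eqref{eq:forward_prop_phi} is not a consequence of \eqref{eq:slc}, and you need an additional hypothesis. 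One option is that each column $\Phit_{\cdot,j}$ is generated by memoryless feedback, as in the Riccati forward pass; this is how Prop.~\ref{prop:sls} invokes the lemma, and the identity then holds by construction. The other is that $\Phix_{k,j}\bar E_j^{\ddagger}$ is invertible for all $k>j$, in which case you may define $\mathcal K_{k,j}:=\Phiu_{k,j}\bar E_j^{\ddagger}(\Phix_{k,j}\bar E_j^{\ddagger})^{-1}$. The paper's proof also obtains this step only by ``expanding the system level constraint,'' so you located the weak point correctly, but your plan does not close it.
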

\begin{proof}
    First, let $\delta_k = E_kw_k$ be such that
    \begin{equation}\label{eq:ltv_proof_equiv}
    x_{k+1} = A_kx_k + B_ku_k + \delta_k.
    \end{equation}
   Now, let $\bm x = [x_0^{\top},\dots,x_{n-1}^{\top}]^{\top},~\bm u = [u_0^{\top},\dots,u_{n-1}^{\top}]^{\top},~\text{and}~\bm{\delta}= [\delta_0^{\top},\dots,\delta_{n-1}^{\top}]^{\top}$. Then, defining the gain mappings $\HPhix$ and $\HPhiu$ such that,
   \begin{equation}\label{eq:gain_map_proof}
       \begin{bmatrix}
           \bm x\\ \bm u
       \end{bmatrix} = \begin{bmatrix}
           \HPhix \\ \HPhiu
       \end{bmatrix} \bm\delta,
   \end{equation}
   we have from \cite{DBLP:journals/arc/AndersonDLM19}: 
   \begin{equation}\label{eq:slc_if}
    \begin{bmatrix}
        (I_{N\nx}-\bm{Z}\bm{A}) & -\bm{Z}\bm{B}
    \end{bmatrix}
    \begin{bmatrix}
        \HPhix \\ \HPhiu
    \end{bmatrix} = I_{N\nx}
\end{equation}
   Defining $\bm{w} = [w_0^{\top},\dots,w_{n-1}^{\top}]^{\top}$, we observe:
   \begin{equation}\label{eq:dist_map}
       \bm{\delta} = \bm{\bar{E}}\bm{w}
   \end{equation}
   Then, the gain mapping from $\bm w$ to $\bm x$ and $\bm u$ via $\Phix$ and $\Phiu$ follows from \eqref{eq:gain_map_proof} and \eqref{eq:dist_map}: 
   \begin{equation}\label{eq:gain_map_proof_number2}
       \begin{bmatrix}
           \bm x\\ \bm u
       \end{bmatrix}  = \begin{bmatrix}
           \Phix \\ \Phiu
       \end{bmatrix}\bm{w} = \begin{bmatrix}
           \HPhix \bm{\bar{E}}\\ \HPhiu\bm{\bar{E}}
       \end{bmatrix} \bm{w} = \begin{bmatrix}
           \HPhix \\ \HPhiu
       \end{bmatrix} \bm{\bar{E}}\bm{w} .
   \end{equation}
   This establishes that $\Phix = \hat\Phix\bm{\bar{E}}$ and $\Phiu = \hat\Phiu\bm{\bar{E}}$. Hence, by multiplying \eqref{eq:slc_if} by $\bm{\bar{E}}$, we arrive at \eqref{eq:slc}. 
   Thus we have shown $\text{\eqref{eq:slc}} \implies \text{\eqref{eq:ltv_proof}}$. To show the equivalence between \eqref{eq:ltv_proof} and \eqref{eq:slc}, we can show $\text{\eqref{eq:ltv_proof}} \implies \text{\eqref{eq:slc}}$, by directly recovering the matrices $A_k, B_k, \bar{E}_k$ $\forall k \in [N]$ from \eqref{eq:slc} to produce \eqref{eq:ltv_proof}. 

   To prove \eqref{eq:controller_gains}, we observe that, 
   $K = \HPhiu(\HPhix)^{-1}$, \cite{DBLP:journals/arc/AndersonDLM19}. Then using the substitutions $\Phix\bm{\bar{E}}^{\ddagger} = \HPhix$ and $\Phiu\bm{\bar{E}}^{\ddagger} = \HPhiu$ from \eqref{eq:gain_map_proof_number2} we arrive at the first equality in \eqref{eq:controller_gains}. The second equality arises from algebraic manipulation using the definition $\bm{ \bar{E}}^{\ddagger} = \bm{ \bar{E}}^{\top}(\bm{ \bar{E}}\bm{ \bar{E}}^{\top})^{-1}$. 
   Lastly, manipulating \eqref{eq:controller_gains} to see $\mathcal{K}(\Phix \bm{\bar{E}}^\ddagger) = \Phiu \bm{\bar{E}}^\ddagger$, and expanding the system level constraint in \eqref{eq:slc} we have that
\begin{equation} \label{eq:forward_prop_phi_id}
    \hspace{-6pt}\begin{aligned}
         &\Phix_{j+1,j} = \bar{E}_j,~~~~~~~~~\Phiu_{k,j}\bar{E}_j^{\ddagger} = \mathcal{K}_{k,j} \Phix_{k,j}\bar{E}_j^{\ddagger}\\
         &\Phix_{k+1,j}\bar{E}_j^{\ddagger} = \left(A_k + B_k \mathcal{K}_{k,j}\right)\Phix_{k,j}\bar{E}_j^{\ddagger}. \\
    \end{aligned}
\end{equation}
\end{proof}

\section{Proofs}\label{app:proofs}

\setcounter{theorem}{3}
\begin{theorem}
Let $\mathbf{z}$, $\mathbf{v}$, and $\Phit$ be a feasible solution for \eqref{eq:fastsls_problem}. Then, \eqref{eq:sls_reach_non_center} is guaranteed to overapproximate the true closed-loop reachable set \eqref{eq:reach}, i.e., $\Omega_k \subseteq \bar\Omega_k$ for all $k \in [N]$.
\end{theorem}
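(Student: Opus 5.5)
We argue by induction on $k$. The controller is the disturbance-feedback policy \eqref{eq:disturbance_feedback} built from the feasible $(\mathbf{z},\mathbf{v},\Phit)$ and applied to the true nonlinear system \eqref{eq:nmpc_dynamics}. The key idea is to reinterpret the true closed loop as an LTV system driven by a "total" disturbance $d_k$ from \eqref{eq:linearization_error}, rather than by $E_kw_k$ alone. The tracking error $\delta z_k = x_k - z_k$ then satisfies
\[
\delta z_{k+1} = A_k\delta z_k + B_k\delta v_k + d_k,\qquad d_k = E_k w_k + r_k(x_k,u_k,w_k),
\]
which is exact by the definition of $r_k$ in \eqref{eq:linearization_error_lagrange}--\eqref{eq:linearization_error_dist}. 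The policy feeds back on the realized $d_j$, i.e. $u_k = v_k + \sum_{j<k}\Phiu_{k,j} d_j$. This is causally implementable, since $d_j = x_{j+1}-f(z_j,v_j)-A_j\delta z_j - B_j\delta v_j$ is computable from past states. The SLS identity \eqref{eq:SLS_state}, together with the recursion $\Phix_{k+1,j}=A_k\Phix_{k,j}+B_k\Phiu_{k,j}$ and $\Phix_{j+1,j}=I$, then gives the exact representation
\[
x_k = z_k + \sum_{j<k}\Phix_{k,j} d_j,\qquad u_k = v_k + \sum_{j<k}\Phiu_{k,j} d_j.
\]
Alternatively, Lemma~\ref{lem:right_invertible} can be invoked with $\bar E_j$ as the disturbance matrix.

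It therefore suffices to show that $d_j \in \bar c_j \oplus \bar E_j\mathcal{E}_{2\nx}$ for every $j<k$. Given that, the representation above lands $(x_k,u_k)$ in $\bar\Omega_k$ as defined in \eqref{eq:sls_reach_non_center}, since $\Phit_{k,j}(\bar c_j + \bar E_j\omega_j)$ summed over $j$ is exactly the Minkowski-sum description. We prove this containment by strong induction on $k$, with hypothesis
\[
(x_j,u_j)\in\bar\Omega_j \quad\text{for all } j\le k.
\]
The base case $k=0$ is trivial, since $x_0=z_0$, $u_0=v_0$ and $\bar\Omega_0=\{(z_0,v_0)\}$.

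For the inductive step, the main obstacle is establishing the link $\bar\Omega_k \subseteq \mathcal{A}_k$, so that the linearization-error bound of Problem~1 is valid over the set actually reached.
\begin{itemize}
\item Feasibility of \eqref{eq:rmpc_tube_width_center} means $\bm\mu_k = h_k^\mu$ and $\bm\sigma_k = h_k^\sigma$.
\item The interval hull of the zonotope $\bar\Omega_k$ has center $[z_k;v_k]+\sum_j\Phit_{k,j}\bar c_j$ and row-wise radius $\sum_j\|\Phit_{k,j}\bar E_j\|_{1,\mathrm r}$. This follows by the dual-norm (H\"older) bound $|e_i^\top M\omega|\le \|M_{i,:}\|_1$ for $\omega\in\mathcal{E}$.
\item These are exactly the center and radius of $\mathcal{A}_k$ in \eqref{eq:zontope_combination_interval}, so $\bar\Omega_k \subseteq \mathcal{A}_k$.
\end{itemize}
By the inductive hypothesis, $(x_k,u_k)\in\mathcal{A}_k$. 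Applying Theorem~\ref{thm:path_hessian} (analytic case) or Theorem~\ref{thm:crown_remainder} (neural case) on the box $\mathcal{A}_k$ yields $r_k^{\mathrm{lin}}\in[\rlow_k^{\mathrm{lin}},\rup_k^{\mathrm{lin}}]$. Equation \eqref{eq:interval_r_dist}, using $x_k$ in the state projection of $\mathcal{A}_k$, yields the bound on $r_k^{\mathrm{dist}}$. Summing gives $r_k\in\bar c_k\oplus\bar R_k\mathcal{E}_{\nx}$. Combining with $E_kw_k\in E(z_k)\mathcal{E}_{\nx}$ and the concatenation identity after \eqref{eq:zonotope_combination} gives $d_k\in\bar c_k\oplus\bar E_k\mathcal{E}_{2\nx}$.

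Plugging all $d_j$, $j\le k$, into the exact representation gives $(x_{k+1},u_{k+1})\in\bar\Omega_{k+1}$, closing the induction. Since this holds for every disturbance realization, $\Omega_k\subseteq\bar\Omega_k$. A final subtlety is that the reachable set \eqref{eq:reach} is defined for general causal $\pi$. We will state explicitly that $\pi$ is the SLS policy induced by $\Phit$, realized through $d_j$ as above, so that \eqref{eq:reach} and \eqref{eq:sls_reach_non_center} refer to the same closed loop.
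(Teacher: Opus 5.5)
Your proposal is correct and follows essentially the paper's route. Both argue by induction that the total disturbance $d_k$ lies in $\bar c_k\oplus\bar E_k\mathcal{E}_{2\nx}$ because $\bar\Omega_k\subseteq\mathcal{A}_k$, and then use SLS propagation of these zonotopes. You also make explicit two steps the paper only asserts: feeding back on the realized $d_j$ so that $x_k=z_k+\sum_{j<k}\Phix_{k,j}d_j$ holds exactly, and the interval-hull computation that matches $\mathcal{A}_k$ to $h_k^{\mu},h_k^{\sigma}$.

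One hypothesis is left implicit, by you and the paper alike: $(z_k,v_k)\in\mathcal{A}_k$, which Theorems~\ref{thm:path_hessian} and~\ref{thm:crown_remainder} assume. It follows within the same induction. Since $\mathcal{A}_0=\{(z_0,v_0)\}$ and soundness at the nominal point gives $0\in[\rlow_j,\rup_j]$, you get $\bar c_j\in\bar R_j\mathcal{E}_{\nx}\subseteq\bar E_j\mathcal{E}_{2\nx}$, and hence $|\bm{\mu}_k|\le\bm{\sigma}_k$ componentwise.
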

\begin{proof}
    We will first prove \eqref{eq:sls_reach_non_center} is correct assuming valid disturbances. 
    Let $\mathcal{W}_k$ specify the disturbance $\forall k \in [N]$. Then, from SLS \cite{leeman2025robust}, we have that the overapproximation of the reachable set is, 
    \begin{equation}
    \begin{aligned}\label{eq:sls_reach_prelim}
        \bar\Omega_k^x := z_k +\textstyle\bigoplus_{j=0}^{k-1}\Phix_{k,j}\mathcal{W}_j,\\\bar\Omega_k^u := v_k + \textstyle\bigoplus_{j=0}^{k-1}\Phiu_{k,j}\mathcal{W}_j.
    \end{aligned}
    \end{equation}
    Now, suppose $\mathcal{W}_k$ can be expressed as a zonotope, $\bar{c}_k \oplus \bar{E}_k\mathcal{E}_{2n_x}$. Making this substitution in \eqref{eq:sls_reach_prelim} we arrive at,
    \begin{equation}
    \begin{aligned}\label{eq:sls_reach_prelim2}
        \bar\Omega_k^x := z_k +\textstyle\bigoplus_{j=0}^{k-1}\Phix_{k,j}(\bar{c}_k \oplus \bar{E}_k\mathcal{E}_{2n_x}),\\\bar\Omega_k^u := v_k + \textstyle\bigoplus_{j=0}^{k-1}\Phiu_{k,j}(\bar{c}_k \oplus \bar{E}_k\mathcal{E}_{2n_x}).
    \end{aligned}
    \end{equation}
    Using the associativity and commutativity of the Minkowski sum \eqref{eq:sls_reach_prelim2} can be rewritten into,

    \begin{equation}
    \begin{aligned}\label{eq:sls_reach_prelim3}
    \bar\Omega_k^x := z_k + \textstyle\bigoplus_{j=0}^{k-1}\Phix_{k,j}\bar{c}_j \oplus \textstyle\bigoplus_{j=0}^{k-1}\Phix_{k,j}\bar{E}_j \mathcal{E}_{2\nx},
    \\\bar\Omega_k^u := v_k+ \textstyle\bigoplus_{j=0}^{k-1}\Phiu_{k,j}\bar{c}_j \oplus \textstyle\bigoplus_{j=0}^{k-1}\Phiu_{k,j}\bar{E}_j \mathcal{E}_{2\nx}.
    \end{aligned}
    \end{equation}
    Then recalling that $\textstyle\bigoplus_{j=0}^{k-1}\Phix_{k,j}\bar{c}_j$ is a Minkowski sum of singleton sets, we can replace it with $\textstyle\sum_{j=0}^{k-1}\Phix_{k,j}\bar{c}_j$ (and similarly do this for $\textstyle\bigoplus_{j=0}^{k-1}\Phiu_{k,j}\bar{c}_j$), and arrive at at the disturbance reachable set \eqref{eq:sls_reach_non_center}. Now, we will show the disturbance bounds, $\mathcal{Z}_r(z_j,v_j, \bm{\mu}_j,\bm{\sigma}_j)$, $\mathcal{Z}_c(z_j,v_j, \bm{\mu}_j,\bm{\sigma}_j)$, in \eqref{eq:rmpc_linearization_radius}-\eqref{eq:rmpc_linearization_center} are valid overapproximations $\mathcal{\bar E}_k$ \eqref{eq:lin_bound} and certify that the closed-loop reachable sets $\bar\Omega^{x}_k$ and $\bar\Omega^{u}_k$ overapproximate the true reachable set $\Omega^x_k$ and $\Omega^u_k$ \eqref{eq:reach}.

    \underline{\textit{Base Case:}} The disturbance at timestep $k=0$ is precisely $E(z_0) = E(\bar x_0)$, which is captured by \eqref{eq:zonotope_combination}, and thus \eqref{eq:rmpc_linearization_radius}-\eqref{eq:rmpc_linearization_center} provides overapproximate disturbance bounds. 

    \underline{\textit{Inductive Step:}} Suppose that for all timesteps $j \in [k]$, $d_j(x_j, u_j, w_j) \in E_j\mathcal{E}_{\nx} \oplus \bar R_j \mathcal{E}_{\nx} \oplus \bar c_j$. Then, we have a overapproximate reachable sets, $\Omega_k^x \subseteq \bar\Omega^{x}_{k}$ and $\Omega_k^u \subseteq \bar\Omega^{u}_{k}$, at timestep $k$. Define $\bar\Omega_k := \{[x^{\top},u^{\top}]^{\top} \mid x \in \bar\Omega^{x}_{k}, u\in\bar\Omega^{u}_{k}\}$. Then observe that $\bar\Omega_k \subseteq \cX_k := \left([z_k^\top, v_k^\top]^\top + \bm{\mu}_k\right) \oplus \textrm{diag}(\bm{\sigma}_k)\mathcal{E}_{\nx+\nuu}$, as defined in \eqref{eq:zontope_combination_interval}. Then, by construction, for all $[x_k^\top,u_k^\top]^\top \in \bar{\Omega}_{k}$ and for all $w_k \in \mathcal{E}_{n_x}$, we have that $r(x_k, u_k, w_k)\in \bar{c}_k \oplus \bar{R}_k\mathcal{E}_{n_x}$, where $(\bar{c}_k,\bar{R}_k)$ is computed using $\cX_k$ in \eqref{eq:zonotope_combination}. Similarly, the nominal disturbance $E_kw_k \in E(z_k)\mathcal{E}_{n_x}$. Thus, the total disturbance \eqref{eq:linearization_error_all},
    $r_k^\textrm{lin} + r_k^{\mathrm{dist}} + E_k w_k = d(x_k,u_k,w_k) \in \bar{c}_k \oplus \bar{R}_k\mathcal{E}_{n_x} \oplus E(z_k)\mathcal{E}_{n_x}$. Concatenating the two generator matrices produces \eqref{eq:zonotope_combination_concat}. Thus, the disturbance bound at timestep $k$ in \eqref{eq:rmpc_linearization_radius}-\eqref{eq:rmpc_linearization_center} is an overapproximation and contains the true residual error, i.e., $d_k(x_k, u_k, w_k)$ \eqref{eq:linearization_error_all}. 

    Having established valid disturbance bounds for all steps $k \in [N]$, SLS guarantees that the overapproximation of the closed-loop reachable set \eqref{eq:sls_reach_non_center}, $\bar\Omega^{x}_{k}$ and $\bar\Omega^{u}_{k}$, are valid, i.e., $\Omega^x_{k} \subset\bar\Omega^{x}_{k}$ and $\Omega^u_{k} \subset\bar\Omega^{u}_{k}$ for all $k \in [N]$.  
\end{proof}

\section{GPUSLS Duals $\tau$}\label{app:sls_consistency}
As discussed in Sec.~\ref{sec:method_nonlinear_sls}, the RNOCP problem is solved by iteratively alternating between solving a nominal trajectory optimization and a controller update. Since these two optimizations are solved separately, we use the dual variable $\tau$ to enforce consistency between the two optimizations as discussed in \cite[App. B]{fang2026safe}. Due to our reformulation, we modify the procedure as follows. As with \cite[App. B]{fang2026safe}, we define the auxiliary terms $\beta_{k,j} \in \mathbb{R}^{n_c}$ and $\beta_{N,j} \in \mathbb{R}^{n_f}$, where
\begin{subequations}
    \begin{align}
        \beta_{k,j} &= \left[ G_k \mathbf{\Phi}_{k,j} \bar{c}_j + \|G_k \mathbf{\Phi}_{k,j}\bar{E}_j\|_{1,\text{r}}\right]^{\circ 2} \\ 
        &\forall j \in [N], \qquad \forall k \in [j, N] \\
        \beta_{N,j} &= \left[ G_N\mathbf{\Phi}^{\text{x}}_{N,j}\bar{c}_j + \| G_N \mathbf{\Phi}^\text{x}_{N, j} \bar{E}_j \|_{1, \text{r}} \right] ^ {\circ 2} \\
        & \forall j \in [N]
    \end{align}
\end{subequations}
where $[\cdot]^{\circ 2}$ denotes element-wise square. We can then calculate our dual variables $\tau$ by  
\begin{equation}
    \tau_{k,j} = \frac{\bar{\lambda}_k}{\sqrt{\beta_{k,j} + \epsilon}} \quad \forall j \in [N], \quad \forall k \in [j, N]
\end{equation}
where we denote $\bar{\lambda}_k$ as the dual variable of the nominal trajectory optimization (corresponding to the dual variable $\lambda_k$ in~\cite{fang2026safe}), introduced here to avoid notation conflict. Finally, we can define our cost terms as
\begin{equation}\label{eq:sls_cost_matrices_tau}
    \begin{aligned}
        \hspace{-5pt}& \mathcal{Q}_{k,j} = \left( \text{diag}\left( \sqrt{\tau_{k,j}}\right)
        G_k,
        \begin{bmatrix}
            \tilde{Q}^{1/2} & 0\\
            0 & \tilde{R}^{1/2}
        \end{bmatrix}
        \right), \\
        \hspace{-5pt}& \mathcal{Q}_{N,j} = \left( \text{diag}\left( \sqrt{\tau_{N,j}}\right) G_N, \tilde{Q}_N^{1/2}\right).
    \end{aligned}
\end{equation}
Following \cite{leeman2024fast} and \cite{fang2026safe}, this formulation preserves dual consistency and yields an equivalent reformulation of the RNOCP under the proposed decomposition.
\end{document}